\theoremstyle{plain}
\newtheorem{theorem}{Theorem}
\newtheorem{lemma}[theorem]{Lemma}
\newtheorem{corollary}[theorem]{Corollary}
\newtheorem{proposition}[theorem]{Proposition}
\theoremstyle{definition}
\newtheorem{definition}[theorem]{Definition}
\definecolor{rougejoli}{RGB}{217.,1,21.}
\renewenvironment{proof}{
\medskip

\par \noindent \textbf{Proof.}
}{\hfill$\Box$\medskip}
\theoremstyle{remark}
\newtheorem{remark}[theorem]{Remark}
\tikzstyle{noeud}=[circle,inner sep=2, minimum size =3 pt, line width = 1pt, draw=black, fill=white]
\tikzstyle{petit_noeud}=[circle,inner sep=1.5, minimum size =2.5 pt, line width = 0.75pt, draw=black, fill=white]
\newcommand{\Dompos}{{\tt Dominator}}
\newcommand{\Stpos}{{\tt Staller}}
\newcommand{\Dom}{{Dominator}}
\newcommand{\St}{{Staller}}
\newcommand{\MB}{Maker-Breaker domination game}
\newcommand{\glue}[2]{
    \tikz[baseline = -5]{
    
        \clip (-0.3,-0.25) rectangle (0.3,0.28);
        \fill (0,0) circle (0.05) node[below left]{{\tiny $#1$}} node[below right]{{\tiny $#2$}};
        \draw (-0.2,-0.1) to[out =0, in = -90] (0,0) ;
        \draw (-0.2,0.1) to[out =0, in = 90] (0,0);
        \draw (0.2,-0.1) to[out =180, in = -90] (0,0) ;
        \draw (0.2,0.1) to[out =180, in = 90] (0,0);

    }
}
\title{Maker-Breaker domination game}
\author{Eric Duchêne $^{a}$ \and Valentin Gledel  $^{a}$ \and Aline Parreau  $^{a}$ \and Gabriel Renault $^b$}
\date{}
\begin{document}

\maketitle

\begin{center}
	$^a$ Univ Lyon, Universit\'e Lyon 1, LIRIS UMR CNRS 5205, F-69621, Lyon, France \\
	\medskip
	$^b$ UMONS, Mons, Belgium \\
		
\end{center}

\begin{abstract}
We introduce the Maker-Breaker domination game, a two player game on a graph. At his turn, the first player, Dominator, selects a vertex in order to dominate the graph while the other player, Staller, forbids a vertex to Dominator in order to prevent him to reach his goal. Both players play alternately without missing their turn. This game is a particular instance of the so-called Maker-Breaker games, that is studied here in a combinatorial context. In this paper, we first prove that deciding the winner of the Maker-Breaker domination game is {\sc pspace}-complete, even for bipartite graphs and split graphs. It is then showed that the problem is polynomial for cographs and trees. In particular, we define a strategy for Dominator that is derived from a variation of the dominating set problem, called the {\em pairing dominating set} problem. 
\end{abstract}

\noindent {\bf Key words:} positional games; Maker-Breaker domination game; domination game; complexity; tree; cograph; 

\medskip\noindent
{\bf AMS Subj.\ Class:} 05C57, 05C69, 91A43

\section{Introduction}

Since their introduction by Erd\H{o}s and Selfridge in \cite{erdos-1973}, positional games have been widely studied in the literature (see \cite{hefetz2014positional} for a recent survey book on the topic). These games are played on an hypergraph of vertex set $X$, with a finite set $\mathcal{F}\subseteq 2^X$ of hyperedges. The set $X$ is often called the {\em board} of the game, and an element of $\mathcal{F}$ a {\em winning set}. The game involves two players that alternately occupy a previously unoccupied vertex of $X$. The winner is determined by a convention: in the {\em Maker-Maker} convention, the first player to occupy all the vertices of a winning set is the winner. Such games may end in a draw, as it is the case in Tic-Tac-Toe. In the {\em Maker-Breaker} convention, the objectives are opposite: one player (the {\em Maker}) aims to occupy all the vertices of a winning set, whereas {\em Breaker} wins if she occupies a vertex in every winning set. In view of the complexity of  solving both kinds of games, Maker-Breaker instances are generally more considered in the literature as by definition, there is always a winner. In addition, rulesets of such games are often built from a graph. For example, one can mention the famous {\em Shannon switching game} (popularized as the game {\sc Bridg-it}) \cite{shannon}, where, given a graph $G=(V,E)$ and two particular vertices $u$ and $v$, the board $X$ corresponds to $E$, and the winning sets are all the subsets of $E$ that form a $u-v$ path in $G$. In the {\em Hamiltonicity} game \cite{connectivity}, the winning sets are all the sets of edges containing an Hamiltonian cycle.  \\

In view of such examples, converting a graph property into a $2$-player game is a natural operation. Hence it is not surprising that it has also been done for dominating sets. More precisely, several games having different rulesets and known as {\em domination game} have been defined in the literature. For example, in \cite{alon,favaron}, a move consists in orienting an edge of a given graph $G$ and the two players try to maximize (resp. minimize) the domination number of the resulting digraph. In \cite{Bicoldom}, the rules require two colors during the play. In \cite{bresar}, the domination game is defined in a sense where the players both select vertices and try to maximize (resp. minimize) the length of the play before building a dominating set. Since then, this version has become the standard one for the domination game, with regular progress on it \cite{dom1,dom2,dom3,dom4}. However, among the different variants of the domination game , the natural Maker-Breaker version (in the sense of Erd\H{o}s and Selfridge) has never been considered in the literature. In this paper, we consider the so-called {\em Maker Breaker Domination game}, where, given a graph $G=(V,E)$, the board $X$ is the set $V$, and $\mathcal{F}$ is the set of all the dominating sets of $G$. In other words, the two players alternately occupy a not yet occupied vertex of $G$. Maker wins if he manages to build a dominating set of $G$, whereas Breaker wins if she manages to occupy a vertex and all its neighbors. In what follows and in order to be consistent with the standard domination game, Maker will be called {\em Dominator}, and Breaker will be the {\em Staller}. \\



When dealing with Maker-Breaker games, there are two main questions that naturally arise:

\begin{itemize}
\item Given a graph $G$, which player has a winning strategy for the \MB\ on $G$ ?
\item If \Dom\ has a winning strategy on $G$, what is the minimum number of turns needed to win?
\end{itemize}

The current paper is about the first question. In the next section, we give definitions for the different cases about the winner, together with first general results. Section 3 deals with the  algorithmic complexity of the problem, where the {\sc pspace}-completeness is proved. In Section 4, a so-called {\em pairing strategy} is given, yielding a strategy for \Dom\ in graphs having certain properties. The last section is about graph operators that lead to polynomial strategies on trees and cographs.

\section{Preliminaries}


A \emph{position} of the \MB\ is denoted by a triplet $G=(V,E,c)$, where $V$ is a set of vertices, $E$ is a set of edges on $V$ and $c$ is a function $c:V \rightarrow \{\Dompos,\Stpos, {\tt Unplayed}\}$. In other words, the function $c$ allows to describe any game position encountered during the play. 
If, for all $u$ in $V$, $c(u)={\tt Unplayed}$, then $G$ is said to be a \emph{starting position}. In this case, we will identify $G$ with the graph $(V,E)$. At his turn, Dominator (respectively Staller) chooses one vertex $u$ with $c(u)={\tt Unplayed}$ and changes its value to $\Dompos$ (resp. $\Stpos$). When there is no more {\tt Unplayed} vertex, either the set of vertices $c^{-1}(\Dompos)$ forms a dominating set, and Dominator wins, or there is one vertex $u$ for which all its closed neighboorhood has value $\Stpos$, and Staller wins. In the latter case, we say that Staller  {\em isolates} $u$. Note that whenever $c^{-1}(\Dompos)$ is a dominating set or a vertex has been isolated by Staller, the winner is already determined and cannot change, since the two conditions are complementary. Thus we will often consider that the game stops when one of the two conditions holds.\\

The \MB\ is a finite game with perfect information and no draw. Thus, there is always a winning strategy for one of the player. There are four cases - also called {\em outcomes} - to characterize the winner of the game, according to who starts. We define $\mathcal D$, $\mathcal S$, $\mathcal N$ and $\mathcal P$ as the different possible outcomes for a position of the \MB . 

\begin{definition}
A position $G$ has four possible outcomes:
\begin{itemize}
\item $\mathcal D$ if Dominator has a winning strategy as first and second player,
\item $\mathcal S$ if Staller has a winning strategy as first and second player,
\item $\mathcal N$ if the next player (i.e., the one who starts) has a winning strategy,
\item $\mathcal P$ otherwise (i.e., the second player wins).
\end{itemize}
\end{definition}

Note that for proximity reasons, the notion of outcome and the last two notations are derived from combinatorial game theory \cite{Siegel}. In addition, the outcome of $G$ is denoted $o(G)$.\\

The following proposition is a direct application of a general result on Maker-Breaker games stated in \cite{beck2008combinatorial, hefetz2014positional}. It ensures that the outcome $\mathcal P$ never occurs. For the sake of completeness, we here give a proof of this result adapted to our particular case.

\begin{proposition}[Imagination strategy]
\label{prop:imagination}
There is no position $G$ of the \MB\ such that $o(G)=\mathcal P$.
\end{proposition}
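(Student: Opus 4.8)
The plan is to argue by contradiction through a strategy-stealing (imagination) argument, whose engine is the monotonicity of the game: an \emph{extra} vertex of one's own colour is never a disadvantage. Concretely, \Dom\ wins exactly when $c^{-1}(\Dompos)$ is a dominating set, and since any superset of a dominating set is again dominating, turning an {\tt Unplayed} vertex into a \Dompos\ vertex can only bring him closer to his goal. Symmetrically, \St\ wins by occupying the whole closed neighbourhood of some vertex, so owning one more vertex can only help her complete such a neighbourhood. Hence, for either player, receiving a free move of her/his own colour never converts a win into a loss; I would first isolate this monotonicity as a short lemma.

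Next I would prove the imagination principle: if a player $P$ wins as the second player on $G$, then $P$ also wins as the first player. Moving first, $P$ plays an arbitrary vertex $v_0$ and thereafter maintains an \emph{imaginary} board on which $v_0$ is still {\tt Unplayed} and on which $P$ moves second, following a winning strategy $\sigma$. When $\sigma$ prescribes a vertex different from the current imaginary free vertex, $P$ plays it on the real board; when $\sigma$ prescribes that free vertex itself, $P$ instead plays a fresh arbitrary vertex, which becomes the new imaginary free vertex. The invariant to maintain is that the real board equals the imaginary one plus a single extra $P$-coloured vertex; by the monotonicity lemma, $P$ wins the real game whenever $\sigma$ wins the imaginary one.

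Finally I would close by contradiction. If $o(G)=\mathcal P$, then the second player wins whoever starts; in particular, when \Dom\ starts, \St\ wins as the second player. The imagination principle then hands \St\ a winning strategy as the first player, so \St\ also wins when she starts — contradicting that under $\mathcal P$ the second player, namely \Dom, should win in that case. Hence $o(G)=\mathcal P$ is impossible. The delicate point, and the step I would treat most carefully, is the bookkeeping of the imagination argument: checking that the prescribed move is always legal on the real board (the collision case is exactly why the free vertex is reassigned) and that the maintained invariant genuinely reduces the real game to the imaginary one through monotonicity.
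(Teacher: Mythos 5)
Your proposal is correct and takes essentially the same route as the paper: the same imagination/strategy-stealing argument in which the first player makes an arbitrary move, treats it as an extra vertex, follows the second-player winning strategy, and reassigns the extra vertex upon collision. The only cosmetic differences are that you isolate the ``an extra own-coloured vertex never hurts'' monotonicity as an explicit lemma and apply the principle to \St\ (the paper applies it to \Dom ), which are symmetric variants of the identical argument.
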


\begin{proof}
Assume there is position $G$ of the \MB\ such that $o(G)=\mathcal P$. This means in particular that \Dom\ wins playing second on $G$. We next give a winning strategy for \Dom\ as first player, which will imply that $o(G)=\mathcal D$, a contradiction.

The strategy for \Dom as first player is the following. He first plays any vertex unplayed and then imagines he did not. He thus considers himself as the second player, seeing this vertex as an extra vertex.
Whenever his winning strategy (as a second player) requires to play the extra vertex, he plays any other unplayed vertex $u$, and considers $u$ as the new extra vertex.
If \Dom\ was winning before all the vertices were chosen, he still wins no later than his last move in the game where he was playing second.
Otherwise, when \St\ chooses the last vertex of the graph, her strategy asks her to play the extra vertex since it is the only one available in the imagined game, but it means that \Dom\ had already won on the previous turn.
\end{proof}

Note that this proposition is valid for any position of the game (and not only for starting positions). In other words, it ensures that a player has no interest to miss his/her turn. Figure~\ref{fig:ex_outcome} gives an example of graphs for the three remaining outcomes.\\

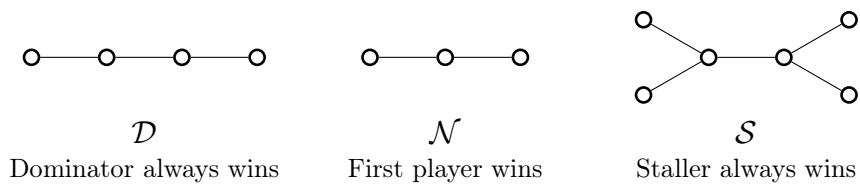
\begin{figure}[ht]
    \centering
    \begin{tikzpicture}
        
        \node at (-0.5,0){
            \begin{tikzpicture}
            	\node[noeud] (a) at (-1,0){}; 
				\node[noeud] (b) at (0,-0){};
				\node[noeud] (c) at (1,-0){};
				\node[noeud] (d) at (2,-0){};

				\draw (a) -- (b) -- (c)--(d);
            \end{tikzpicture}
        };
        
        \node at (3.5,0){
            \begin{tikzpicture}
            	\node[noeud] (a) at (-1,0){}; 
				\node[noeud] (b) at (0,-0){};
				\node[noeud] (c) at (1,-0){};

				\draw (a) -- (b) -- (c);
            \end{tikzpicture}
        };
        
        \node at (7.5,0){
            \begin{tikzpicture}
				\node[noeud] (b) at (0,0){};
				\node[noeud] (c) at (1,0){};
				\node[noeud] (d) at (-0.866,-0.5){};
				\node[noeud] (e) at (-0.866,0.5){};
				\node[noeud] (f) at (1.866,-0.5){};
				\node[noeud] (g) at (1.866,0.5){};
				
				\draw (c) -- (b);
				\draw (b) -- (d);
				\draw (b) -- (e);
				\draw (c) -- (f);
				\draw (c) -- (g);
            \end{tikzpicture}
        };
        
        \node at (-0.5,-1) {$\mathcal D$};
        \node at (3.5,-1) {$\mathcal N$};
        \node at (7.5,-1) {$\mathcal S$};
        \node at (-0.5,-1.5) {\footnotesize Dominator always wins};
        \node at (3.5,-1.5) {\footnotesize First player wins};
        \node at (7.5,-1.5) {\footnotesize Staller always wins};
        
    \end{tikzpicture}
    \caption{Example of a graph for each possible outcome.}
    \label{fig:ex_outcome}
\end{figure}


According to the three possible outcomes of a position, we now introduce an order relation on the outcomes derived from combinatorial game theory: $\mathcal{ S \prec N \prec D}$. This allows us to state the following proposition.

\begin{proposition}
\label{prop:subgraph}
Let $G=(V,E,c)$ be a position of the \MB\ and let $H=(V,E',c)$ be another position, with $E' \subseteq E$. Then $o(H) \preceq o(G)$.
\end{proposition}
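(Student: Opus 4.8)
The plan is to reduce the statement to one structural observation about dominating sets together with a strategy-transfer argument. First I would note that the board of both games is the same vertex set $V$, and that the moves available at any moment (the unplayed vertices) depend only on $c$, not on the edge set: removing edges changes neither which vertices a player may occupy nor the alternation of turns. Consequently the games on $H$ and on $G$ have exactly the same game tree, and a strategy for either player on $H$ is verbatim a strategy on $G$; the two positions can differ only in which terminal (fully coloured) positions are declared wins for \Dom.

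The key observation is that every dominating set of $H$ is a dominating set of $G$. Indeed, since $E' \subseteq E$, the closed neighbourhood of each vertex in $H$ is contained in its closed neighbourhood in $G$, so any set meeting every closed $H$-neighbourhood also meets every closed $G$-neighbourhood. Hence, at a terminal position, if $c^{-1}(\Dompos)$ dominates $H$ then it dominates $G$; equivalently, every terminal position won by \Dom\ on $H$ is also won by \Dom\ on $G$.

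I would then transfer winning strategies. Suppose \Dom\ has a winning strategy on $H$ as first player. Playing the identical strategy on $G$, the same terminal positions are reached against any play of \St, and each of them, being \Dom-winning on $H$, is \Dom-winning on $G$ by the previous paragraph; so \Dom\ wins on $G$ as first player, and the same argument works for the second player. Writing $D_1(P)$ and $D_2(P)$ for the statements that \Dom\ wins playing first and second on $P$, this yields $D_1(H)\Rightarrow D_1(G)$ and $D_2(H)\Rightarrow D_2(G)$. Finally I would translate these implications into the outcome order: using Proposition~\ref{prop:imagination} to exclude $\mathcal P$, the three outcomes $\mathcal S$, $\mathcal N$, $\mathcal D$ correspond exactly to the pairs $(D_1,D_2)$ equal to $(\mathrm{false},\mathrm{false})$, $(\mathrm{true},\mathrm{false})$, $(\mathrm{true},\mathrm{true})$, and under this identification $\preceq$ is precisely the coordinatewise implication order, so the two implications give $o(H)\preceq o(G)$.

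I do not expect a deep obstacle here: the crux is simply to fix the monotonicity direction correctly (deleting edges shrinks neighbourhoods and hence the family of winning sets, which can only hurt \Dom) and to check that the strategy transfer is legitimate under the early-stopping convention. The latter is fine because a \Dom-winning play of $H$ produces a terminal colouring in which $c^{-1}(\Dompos)$ dominates $G$, so \St\ never isolates a vertex of $G$ either, domination and isolation being complementary.
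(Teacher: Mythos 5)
Your proposal is correct and follows essentially the same route as the paper: transfer \Dom's strategy verbatim from $H$ to $G$, using the fact that $E'\subseteq E$ implies every set dominating $H$ also dominates $G$. The paper's proof is just a terser version of this; your extra steps (the identical game trees, the separate first-/second-player implications, and the translation into the order $\mathcal{S}\prec\mathcal{N}\prec\mathcal{D}$ via Proposition~\ref{prop:imagination}) merely make explicit what the paper leaves implicit.
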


\begin{proof} A reformulation of the proposition is that if \Dom\ has a winning strategy on $H$, then he also has a winning strategy on $G$.

Assume \Dom\ has a winning strategy on $H$.
A winning strategy for \Dom\ on $G$ is to apply the same strategy as on $H$. 
Indeed, for every possible sequence of moves of \St , \Dom\ is able to dominate $H$. Since every edge of $H$ is also in $G$, \Dom\ is also able to dominate $G$.
\end{proof}

In other words, adding edges to a position can only benefit \Dom, and removing edges can only benefit \St . Note that this property does not hold in the standard domination game.\\

Another result can be derived from Maker-Breaker games. The following theorem is a well known result from the early studies about positional games.

\begin{theorem}[Erd\H os-Selfridge Criterion~\cite{erdos-1973}]
\label{thm:criterion}
Given a Maker-Breaker game $G$ on an hypergraph $(X,\mathcal F)$, if
$ \sum_{A \in \mathcal F} 2^{-|A|}<\frac{1}{2}$ then Breaker wins on $G$ playing second.
\end{theorem}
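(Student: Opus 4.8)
The plan is to use a potential (weight) function argument, the classical route to the Erd\H os--Selfridge criterion. At any point in the play, call a winning set $A \in \mathcal{F}$ \emph{alive} if Breaker has not yet claimed any of its vertices, and for each alive set let $u(A)$ denote the number of its vertices still unclaimed by either player. I define the potential of the current position as
$$\Phi = \sum_{A \text{ alive}} 2^{-u(A)}.$$
Two observations frame the whole proof: initially $\Phi = \sum_{A \in \mathcal{F}} 2^{-|A|} < \tfrac12$, and if Maker ever completes a winning set $A$ then $A$ is alive with $u(A)=0$, so that set alone contributes $2^{0}=1$ and forces $\Phi \ge 1$. Hence it suffices to exhibit a Breaker strategy keeping $\Phi < 1$ throughout.

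The strategy I would give Breaker is the greedy one: on each of her turns she claims an unclaimed vertex $y$ maximizing the \emph{danger} $g(y) = \sum_{A \ni y,\ A \text{ alive}} 2^{-u(A)}$, i.e. the vertex whose removal deletes the largest total weight. First I would record how $\Phi$ changes under a single move. When Maker claims a vertex $x$, every alive set through $x$ loses one unclaimed vertex and thus doubles its weight, so $\Phi$ increases by exactly $g(x)$ computed just before the move. When Breaker claims $y$, every alive set through $y$ dies, so $\Phi$ decreases by exactly $g(y)$.

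The heart of the argument is a monotonicity estimate obtained by pairing each Breaker move with the \emph{following} Maker move rather than the preceding one. Consider Breaker's move claiming $y_i$ (reduction $g(y_i)$) immediately followed by Maker's move claiming $x_{i+1}$ (increase equal to the danger of $x_{i+1}$ in the position after $y_i$). Since Breaker only deletes sets, the danger of $x_{i+1}$ after $y_i$ is at most its danger before $y_i$; and by the greedy maximality the danger of $y_i$ is at least that of $x_{i+1}$ before $y_i$. Chaining these two inequalities shows that Breaker's reduction is at least the subsequent Maker increase, so the value of $\Phi$ measured just after each Maker move is nonincreasing from one such snapshot to the next.

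It then remains to control the single unpaired move, Maker's very first one, which has no Breaker move before it. This free move can at most double the contribution of each set through $x_1$, so immediately after it $\Phi \le 2\Phi_0 = 2\sum_{A \in \mathcal F} 2^{-|A|} < 1$; combined with the monotonicity above, every post-Maker snapshot of $\Phi$ stays strictly below $1$. Since a Maker win would require $\Phi \ge 1$ at such a snapshot, Maker can never complete a winning set, and Breaker wins. I expect the main obstacle to be precisely this asymmetry from Maker moving first: one must pair moves in the correct direction and separately absorb the free initial move, and it is exactly this loss of a factor $2$ that explains why the threshold is $\tfrac12$ rather than $1$.
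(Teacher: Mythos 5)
The paper gives no proof of this theorem at all: it is quoted as a known result with a citation to \cite{erdos-1973}, so there is nothing internal to compare against. Your argument is correct and is precisely the classical potential-function proof from that literature — greedy danger-maximizing Breaker, the pairing of each Breaker move with the \emph{following} Maker move to make the post-Maker snapshots of $\Phi$ nonincreasing, and the factor $2$ (hence the threshold $\tfrac12$) absorbed by Maker's unpaired first move, whose increase $g_0(x_1)=\sum_{A\ni x_1}2^{-|A|}\le\Phi_0$ gives $\Phi\le 2\Phi_0<1$.
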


In order to apply this theorem to the \MB , we need to consider a reverse version of it. Indeed, as the set $\mathcal F$ corresponds to the dominating sets of $G$, the sizes of the winning sets are not easy to control. Thus, we can also consider the \MB\ as the Maker-Breaker game where $\mathcal F$ is the set of the closed neighborhoods of every vertex of $G$. In that case, \Dom\ is the Breaker, and \St\ is the Maker. Now Theorem~\ref{thm:criterion} can be applied on this game:  

\begin{proposition}
Let $G$ be a starting position of the \MB\ and let $\delta$ be the minimum degree of $G$. If $|V|< 2^\delta$ then \Dom\ has a winning strategy for the \MB\ on $G$ playing second.
\end{proposition}

\begin{proof}
As stated before, the \MB\ on $G$ is a Maker-Breaker game played on $\mathcal H= (V, \mathcal{F})$ where $\mathcal{F}$ is the set of the closed neighborhoods of $G$, and \St\ plays the role of Maker in this game. Applying the Erd\H os-Selfridge Criterion, we know that if $\sum_{u \in \mathcal V} 2^{-|N[u]|}<\frac{1}{2}$ then \Dom\ has a winning strategy. For all $u$ in $V$, we have $N[u]\geq \delta +1$, hence $2^{-|N[u]|} \leq 2^{-(\delta+1)}$. Thus if $|V| \times 2^{-(\delta + 1)}<\frac{1}{2}$ then \Dom\ has a winning strategy. 
\end{proof}

This result can be applied to prove that some families of graphs are $\mathcal D$ (e.g. $r$-regular graphs having $r>\log_2 |V|$). In addition, it also suggests that highly connected graphs are more advantageous for \Dom . 

\section{Complexity}

In this section, we consider the computational complexity of deciding whether a game position of the \MB\ is $\mathcal{S}$, $\mathcal{N}$, or $\mathcal{D}$. First, remark that in the general case, deciding the outcome of a Maker-Breaker game $(X,\mathcal{F})$ is {\sc pspace}-complete. Indeed, this game exactly corresponds to the game {\sc pos-cnf} that was proved to be {\sc pspace}-complete in~\cite{poscnf}.\\

{\sc pos-cnf} is played on a formula $F$ in conjunctive normal form, with variables $X_1, \ldots,X_n$, where each variable is positive, that is $F = C_1 \wedge \cdots \wedge C_m$ with clauses $C_i = X_{i_1} \vee \cdots \vee X_{i_{k_i}}$. Two players, Prover and Disprover, alternate turns in choosing a variable that has not been chosen yet.
When all variables have been chosen, variables chosen by Prover are set to true, while variables chosen by Disprover are set to false.
Prover wins if $F$ is true under this valuation and Disprover wins otherwise.
Without loss of generality, we can consider that each variable appears in the formula, otherwise we consider the formula $F' = F \wedge (X_1 \vee \cdots \vee X_n)$. Clearly, any Maker-Breaker game $(X,\mathcal{F})$ is equivalent to a {\sc pos cnf} game, as $X$ corresponds to the set of variables, and the winning sets correspond to the clauses. Prover has the same role as Breaker, and Maker has the role of Disprover. \\

The complexity of this game remains {\sc pspace}-complete when reduced to instances of the \MB:

\begin{theorem}\label{thm:pspace}
Deciding the outcome of a \MB\ position is {\sc pspace}-complete on bipartite graphs.
\end{theorem}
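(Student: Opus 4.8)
The plan is to treat the two halves of {\sc pspace}-completeness separately. Membership in {\sc pspace} is the easy part: a play of the \MB\ lasts at most $|V|$ moves, each move just picks an unplayed vertex, and the terminal test (is the set of \Dom's vertices dominating, or has a vertex been isolated?) is polynomial; hence the outcome can be evaluated by the usual alternating min/max recursion of depth $|V|$, which uses only polynomial space. So I would spend essentially the whole argument on hardness.

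For hardness I would reduce from {\sc pos-cnf}, which the excerpt already identifies as the generic Maker-Breaker game and which is {\sc pspace}-complete. The convenient viewpoint is the one introduced just before the statement: regard the \MB\ as the Maker-Breaker game whose winning sets are the closed neighborhoods $\{N[v] : v\in V\}$, so that \Dom\ plays Breaker/Prover and \St\ plays Maker/Disprover. Given $F=C_1\wedge\cdots\wedge C_m$ on variables $X_1,\dots,X_n$, I would build a bipartite graph $G$ with a vertex $x_i$ for each variable on one side and a vertex $c_j$ for each clause on the other side, joining $c_j$ to $x_i$ exactly when $X_i\in C_j$. The intended dictionary is: \Dom\ taking $x_i$ corresponds to Prover setting $X_i$ true, \St\ taking $x_i$ to Disprover setting $X_i$ false, and $N[c_j]=\{c_j\}\cup\{x_i:X_i\in C_j\}$ is hit by \Dom\ iff \Dom\ owns some $x_i\in C_j$, i.e.\ iff clause $C_j$ is satisfied.

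This skeleton is not yet faithful, and fixing it is where almost all the work lies; two problems must be handled by auxiliary gadgets while keeping $G$ bipartite. First, in the \MB\ \emph{every} vertex contributes a closed neighborhood, so the spurious sets $N[x_i]=\{x_i\}\cup\{c_j:X_i\in C_j\}$ are also potential \St-winning sets, and \Dom\ is moreover required to dominate every $x_i$; I would neutralize these by attaching a guard gadget (for instance one, or a few, vertices adjacent to all the $x_i$ that \Dom\ secures on his first move), so that every $N[x_i]$ becomes permanently dominated and can never be isolated, leaving the clause neighborhoods as the only sets that decide the game. Second, and more seriously, nothing yet stops \Dom\ from cheating by occupying a clause center $c_j$ itself: since $c_j\in N[c_j]$, this protects $C_j$ without \Dom\ owning any variable of $C_j$, which has no counterpart in {\sc pos-cnf} (Prover cannot satisfy a clause directly). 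I would close this loophole with forcing gadgets, namely pendant edges/paths arranged so that every move of \St\ outside the variable set creates an immediate isolation threat that \Dom\ must answer, thereby draining \Dom\ of the spare tempo he would otherwise use to occupy clause centers, so that his only genuinely free moves are the variable picks that mirror Prover. The turn parity is then tuned with a small pendant gadget; since knowing $o(G)$ determines whether \Dom\ wins as first player, it suffices to simulate a single starting player, which relaxes the parity bookkeeping.

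Finally I would establish the equivalence in both directions: from a winning Prover strategy, \Dom\ wins by grabbing the guard and thereafter answering \St's variable moves with Prover's replies (and answering the forcing threats as dictated), thus keeping a \Dom-owned variable in every clause; conversely, from a winning Disprover strategy, \St\ wins by driving some clause to have all its variables hers and then isolating that clause's center. The hard part will be the interaction of the center-cheat with tempo: I must design the forcing gadgets so that \Dom\ provably never has a spare move to defuse a clause all of whose variables are lost, while guaranteeing that these same gadgets create no new \St-winning set and hand \St\ no extra tempo. Verifying this balance — in effect a pairing/tempo invariant establishing that every non-variable move is a forced exchange that advantages neither player — is the crux of the proof, and is exactly the point at which one must check that the bipartite structure survives all the gadget additions.
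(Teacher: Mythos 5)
Your overall frame (reduce from {\sc pos-cnf}, variable vertices on one side, clause vertices on the other, \Dom\ $=$ Prover as Breaker for the closed-neighborhood sets) matches the paper, and your diagnosis of the two obstructions is exactly right. But there is a genuine gap: everything that makes the reduction work is deferred to gadgets you never construct, and you yourself flag the ``pairing/tempo invariant'' as unproven. The gap is real and not merely cosmetic. Your single-clause-vertex skeleton is precisely the construction that fails: for $F=(X_1)\wedge(X_2)$ it produces two disjoint copies of $P_2$, which has a perfect matching and hence outcome $\mathcal D$ by the pairing strategy, even though Disprover wins the {\sc pos-cnf} game --- \Dom\ takes $c_1$ after \St\ takes $x_1$, trading one tempo for permanent protection of that clause. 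Your proposed fix (pendant forcing paths that make every \St\ move off the variables an immediate threat) does not address this trade: the cheat occurs on \emph{\Dom's} moves, and draining his tempo via \St-initiated threats does not remove his option of spending one reply on a clause center while conceding \St\ a free variable. Whether that exchange is always losing for \Dom\ is formula-dependent (in the example above it is winning for him), so ruling it out is exactly the unproven crux, not a checkable afterthought. The guard vertex has the symmetric problem: when \St\ moves first she can take it, reviving the spurious sets $N[x_i]$.

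The paper avoids all of this with one idea you are missing: put \emph{two} copies $c_j^0,c_j^1$ of each clause vertex, both adjacent to the same variable vertices (bipartiteness is preserved). \Dom's strategy pairs the copies --- whenever \St\ takes $c_j^k$ he takes $c_j^{1-k}$ --- so at the end he owns one copy of every clause; since every variable occurs in some clause, this alone dominates all variable vertices, making your guard gadget unnecessary. Symmetrically, the duplication kills the center-cheat: \Dom\ can occupy at most one copy of a clause, so if Disprover wins and some clause ends with all its variables taken by \St, she owns (or can take) the other copy, whose closed neighborhood is then entirely hers, and she isolates it. No tempo accounting is needed at all; the only residual bookkeeping is parity, handled by the imagination strategy of Proposition~\ref{prop:imagination}. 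In short: your membership argument and reduction source are fine, but the load-bearing gadget is absent, and the route you sketch toward it would force you to prove a global tempo invariant that is false for the construction as described.
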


\begin{proof}
We reduce the problem from {\sc pos-cnf}.
Let $F = C_1 \wedge \cdots \wedge C_m$ be a positive formula in conjunctive normal form using $n$ variables $X_1 \cdots X_n$. 

We build a bipartite graph $G = (V,E)$ from $F$ as follows. There is one vertex for each variable and two vertices for each clause:
$$V = \{x_i|1 \leq i \leq n\} \cup \{c^k_j|1 \leq j \leq m, 0 \leq k \leq 1\},$$
and an edge between a variable vertex $x_i$ and a clause vertex $c^k_j$ (with $k\in \{0,1\}$) if the variable $X_i$ appears in the clause $C_j$. Figure \ref{fig:reduction_POScNF} shows an example of such a construction, from the example where $F=(X_1 \vee X_2) \wedge (X_1  \vee X_4) \wedge  (X_2 \vee X_3 \vee X_4)$.\\

\begin{figure}[ht]
\centering
\scalebox{1}{
\begin{tikzpicture}

\node[noeud] (x1) at (0,0){};
\node[noeud] (x2) at (1.25,0){};
\node[noeud] (x3) at (2.5,0){};
\node[noeud] (x4) at (3.75,0){};

\node[above] at (x1) {$x_1$};
\node[above] at (x2) {$x_2$};
\node[above] at (x3) {$x_3$};
\node[above] at (x4) {$x_4$};

\node[noeud] (c1) at (-0.8,-3){};
\node[noeud] (c1b) at (0.1,-3){};
\node[noeud] (c2) at (1.5,-3){};
\node[noeud] (c2b) at (2.4,-3){};
\node[noeud] (c3) at (3.8,-3){};
\node[noeud] (c3b) at (4.7,-3){};

\node[below] at (c1){$c_1^0$};
\node[below] at (c1b) {$c_1^1$};
\node[below] at (c2) {$c_2^0$};
\node[below] at (c2b) {$c_2^1$};
\node[below] at (c3) {$c_m^0$};
\node[below] at (c3b) {$c_m^1$};

\draw (x1) -- (c1);
\draw (x1) -- (c1b);
\draw (x1) -- (c2);
\draw (x1) -- (c2b);
\draw (x2) -- (c1);
\draw (x2) -- (c1b);
\draw (x2) -- (c3);
\draw (x2) -- (c3b);
\draw (x3) -- (c3);
\draw (x3) -- (c3b);
\draw (x4) -- (c2);
\draw (x4) -- (c2b);
\draw (x4) -- (c3);
\draw (x4) -- (c3b);

\end{tikzpicture}
}
\caption{Reduction from {\sc pos-cnf} on $(X_1 \vee X_2) \wedge (X_1  \vee X_4) \wedge (X_2 \vee X_3 \vee X_4)$.}
\label{fig:reduction_POScNF}
\end{figure}
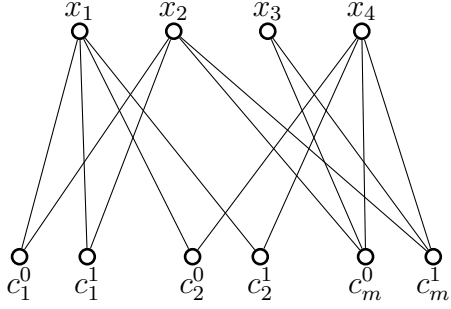

We now show that Prover has a winning strategy in $F$ as first player (respectively second player) if and only if \Dom\ has a winning strategy in $G$ as first (resp. second) player.

Assume Prover has a winning strategy in $F$. We first consider the case where Prover is the last player to play in {\sc pos-cnf} (i.e. $n$ is odd if Prover plays first and even if Prover plays second).

\Dom\ builds his strategy on $G$ as follows:
\begin{itemize}
\item If Prover and \Dom\ are starting the game, \Dom\ chooses the vertex $x_i$ corresponding to the variable $X_i$ played by Prover in his wining strategy.
\item Whenever \St\ chooses a vertex $c^k_j$, \Dom\ answers by choosing the vertex $c^{1-k}_j$.
\item Whenever \St\ chooses a vertex $x_i$, \Dom\ assumes Disprover chose the variable $X_i$. Then he answers by choosing the vertex $x_j$ corresponding to the variable $X_j$ played by Prover in his wining strategy.
\end{itemize}

This last step is always possible since we assume that Prover is playing the last in the {\sc pos-cnf} game. When all vertices are chosen, since Prover was winning in $F$, for each vertex $c^k_j$, there is a neighbor $x_i$ that was chosen by \Dom.
As all variables are in a clause, and for each $j$, \Dom\ chose either $c^0_j$ or $c^1_j$, all vertices of the form $x_i$ are also dominated by \Dom's choice of vertices.
Hence \Dom\ wins the game.

If Prover is not the last player to move, \Dom\ follows the same strategy but when \St\ is playing the last variable vertex, \Dom\ cannot answer a variable vertex. Then he can play any clause variable $c^k_j$ and imagines he did not, as in the Imagination strategy of Proposition~\ref{prop:imagination}, and goes on according to his strategy. If Staller answers the second vertex of the clause $C_j$ at some point, then \Dom\ chooses another unplayed clause vertex. At the end, we will also have, as before, one vertex of each clause plays by \Dom\ and the same conclusion holds.

Assume now Disprover has a winning strategy in $F$. The strategy for \St\ is exactly the same:
\begin{itemize}
\item Whenever Disprover's strategy requires to choose a variable $X_i$, \St\ chooses the vertex $x_i$.
\item Whenever \Dom\ chooses a vertex $c^k_j$, \St\ answers by choosing the vertex $c^{1-k}_j$.
\item Whenever \Dom\ chooses a vertex $x_i$, \St\ assumes Prover chose the variable $X_i$.
\end{itemize}

If the last step is not possible, this means that all the variables are chosen. Then, there exists a clause $C_j$ for which no variables are chosen by Prover. If $c^0_j$ and $c^1_j$ are already played, one of them has been chosen by \St, and thus is isolated. Otherwise, \St\ chooses $c^0_j$ and isolates it. In both cases, \St\ wins.
\end{proof}

\begin{corollary}
Deciding the outcome of a \MB\ position is {\sc pspace}-complete on chordal graphs, and also in particular on split graphs.
\end{corollary}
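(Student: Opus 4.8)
The plan is to reuse the reduction from {\sc pos-cnf} of Theorem~\ref{thm:pspace}, modifying the constructed graph so that it becomes a split graph while preserving its outcome. Starting from the bipartite graph $G=(V,E)$ built from a positive formula $F=C_1\wedge\cdots\wedge C_m$, I would add every edge between two variable vertices, turning $\{x_1,\dots,x_n\}$ into a clique and leaving the clause vertices $\{c_j^k\}$ untouched; call $G'$ the resulting graph. Since the clause vertices receive no new edge, they still form an independent set, so $G'$ is a split graph (clique $\{x_i\}$, independent set $\{c_j^k\}$), and in particular a chordal graph. Membership in {\sc pspace} is clear, as the game lasts polynomially many moves and can be decided by the usual recursive search over the alternating move tree in polynomial space, so it only remains to check that $o(G')$ still encodes the winner of $F$; this yields the statement for split graphs and hence for chordal graphs.

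Next I would argue that \Dom\ wins $G'$ (as first, resp. second, player) if and only if Prover wins $F$ (as first, resp. second, player), exactly as in Theorem~\ref{thm:pspace}. For the direction where Prover wins, note that \Dom 's strategy described there selects the very same set of vertices as in $G$; since that set dominates $G$ and $G'$ contains all edges of $G$, it still dominates $G'$, so \Dom\ wins. Equivalently, this direction is free from Proposition~\ref{prop:subgraph}, adding edges only helping \Dom. For the direction where Disprover wins, I would run \St 's strategy verbatim: she ends up isolating some clause vertex $c_j^k$, and its closed neighborhood $N[c_j^k]=\{c_j^k\}\cup\{x_i : X_i\in C_j\}$ is unchanged by the edges added inside the variable clique. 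Hence $c_j^k$ is isolated in $G'$ just as in $G$, and \St\ still wins.

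The one point requiring care, and the main potential obstacle, is precisely this last direction. By Proposition~\ref{prop:subgraph} the extra edges can only push the outcome towards $\mathcal D$, so a priori completing the variable side into a clique might rescue \Dom\ exactly when $F$ is a Disprover win. The resolution is that \St 's winning plan only ever aims at isolating a clause vertex, and the new edges lie entirely within the variable clique and therefore enlarge no closed neighborhood of a clause vertex; the set of vertices \St\ must occupy to isolate $c_j^k$ is the same in $G$ and in $G'$. Combining the two directions gives a one-to-one correspondence between the outcomes of $F$ and of $G'$ in each player role, covering the cases $\mathcal D$, $\mathcal S$ and $\mathcal N$ (recall that $\mathcal P$ cannot occur by Proposition~\ref{prop:imagination}). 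This completes the reduction and establishes {\sc pspace}-completeness on split graphs, and thus on chordal graphs.
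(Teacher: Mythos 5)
Your proposal is correct and is essentially the paper's own proof: the authors likewise observe that the reduction of Theorem~\ref{thm:pspace} remains valid when all edges between variable vertices are added, turning them into a clique and making the graph split (hence chordal). You merely spell out the details the paper leaves implicit, namely that added edges only help \Dom\ (Proposition~\ref{prop:subgraph}) while \St 's strategy targets clause vertices whose closed neighborhoods are unchanged.
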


\begin{proof}
The proof of Theorem~\ref{thm:pspace} remains valid by adding edges between the variable vertices. In particular, if they form a clique, the resulting graph is a split graph, that is special case of chordal graphs.
\end{proof}
 
In view of these complexity results, the question of the threshold between {\sc pspace}-completeness and polynomiality is of natural interest. The following section is a first step towards it, with a characterization of a certain structure in the graph that induces a natural winning strategy for \Dom.


\section{Pairing strategy}

A natural winning strategy for Breaker in a Maker-Breaker game is the so-called {\em pairing strategy} as defined in \cite{hefetz2014positional}. This strategy can be applied when a subset of the board $X$ can be partitioned into pairs such that each winning set contains one of the pairs. In that case, a strategy for Breaker as a second player consists in occupying the other element of the pair that has been just occupied by Maker. By doing so, Breaker will occupy at least one element in each winning set and thus win the game. In the context of the \MB, such a subset correspond to a special dominating set that we introduce below.

\begin{definition}
Given a graph $G=(V,E)$, a subset of pair of vertices $\{(u_1,v_1),\ldots,(u_k,v_k)\}$ of $V$ is a {\em pairing dominating set} if all the vertices are distinct and if the intersection of the closed neighborhoods of each pair covers the vertices of the graph: $$V=\bigcup_{i=1}^k N[u_i]\cap N[v_i].$$
\end{definition}

Figure~\ref{fig:pair_dom_not_edge} shows an example of a pairing dominating set. Clearly, if one chooses a vertex in $(u_i,v_i)$ for each pair of a {\em pairing dominating set}, the resulting set is a dominating set of $G$.

\begin{figure}[ht]
\centering
\begin{tikzpicture}

\node[noeud] (1) at (-1.866,0.866){};
\node[noeud] (2) at (-1.866,-0.866){};
\node[noeud] (3) at (-1,0){};
\node[noeud] (4) at (0,0.866){};
\node[noeud] (5) at (0,0){};
\node[noeud] (6) at (0,-0.866){};
\node[noeud] (7) at (1,0){};
\node[noeud] (8) at (1.866,0.866){};
\node[noeud] (9) at (1.866,-0.866){};

\node[left] at (1){$u_1$};
\node[left] at (2){$v_1$};
\node[right] at (8){$u_2$};
\node[right] at (9){$v_2$};
\node[above] at (3){$u_3$};
\node[above] at (7){$v_3$};

\draw (3)--(1)--(2)--(3)--(4)--(7)--(8)--(9)--(7)--(5)--(3)--(6)--(7);

\end{tikzpicture}

\caption{The set $\{(u_1,v_1),(u_2,v_2),(u_3,v_3)\}$ is a pairing dominating set.}
\label{fig:pair_dom_not_edge}
\end{figure}
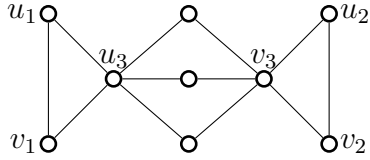

From this definition, we will say that a vertex $w$ is {\em pairing dominated} if there exists a pair $(u,v)$ from a pairing dominated set such that $w\in N[u]\cap N[v]$. In addition, all the pairs $(u,v)$ satisfying $N[u]\cap N[v]=\emptyset$ are useless in the construction of a pairing dominating set. Note that a pair $(u,v)$ of a pairing dominating set is not nessarily an edge of the graph.\\ 

The pairing strategy applied to the \MB\ can be translated into a strategy on a pairing dominating set:

\begin{proposition}\label{prop:pds}
If a graph $G$ admits a pairing dominating set, then $o(G)=\mathcal{D}$.
\end{proposition}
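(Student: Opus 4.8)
The plan is to show that whenever $G$ admits a pairing dominating set, \Dom\ wins regardless of who starts, i.e. $o(G)=\mathcal D$. By definition of the outcome $\mathcal D$, it suffices to exhibit a winning strategy for \Dom\ both as first and as second player. Since adding a move can only help \Dom\ (the flavor of the Imagination strategy in Proposition~\ref{prop:imagination}), the crux is really to give \Dom\ a winning strategy as the \emph{second} player; the first-player case then follows by the imagination trick already used in Proposition~\ref{prop:imagination}. So I would fix a pairing dominating set $\mathcal{P}=\{(u_1,v_1),\ldots,(u_k,v_k)\}$ and describe \Dom's responses to \St's moves.

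The key idea is the classical pairing-strategy response, adapted to our setting. I would have \Dom\ maintain the invariant that, for each pair $(u_i,v_i)$, he always holds at least one of its two vertices by the time both have left the {\tt Unplayed} state. Concretely, when \St\ plays a vertex that is one element of some pair $(u_i,v_i)$ whose partner is still {\tt Unplayed}, \Dom\ immediately answers by taking that partner. When \St\ plays a vertex lying in no pair of $\mathcal{P}$, or a vertex whose partner \Dom\ already owns, \Dom\ is free and plays an arbitrary unplayed vertex (or, if he wishes, defers via an imagined extra move). This guarantees that at the end of the game \Dom\ owns at least one vertex from every pair $(u_i,v_i)$.

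It then remains to verify that owning one vertex per pair yields a dominating set. This is exactly where the defining property of the pairing dominating set is used: for every vertex $w\in V$ there is some pair $(u_i,v_i)$ with $w\in N[u_i]\cap N[v_i]$, meaning $w$ is a closed neighbor of \emph{both} $u_i$ and $v_i$. Whichever of the two \Dom\ ends up owning, that vertex dominates $w$. Hence $c^{-1}(\Dompos)$ is a dominating set and \Dom\ wins, establishing that \Dom\ wins as second player.

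The main subtlety I expect to handle carefully is the bookkeeping when \St\ plays the \emph{second} vertex of a pair that \Dom\ has not yet touched, and the parity issue of who moves last. Because a single \St\ move can only ever compromise one pair at a time, the pairwise response is always available as long as the partner is unplayed, and the earlier Imagination-strategy reasoning resolves the case where \Dom\ is first or has a spare move. I would therefore structure the proof as: (i) reduce to the second-player case by imagination; (ii) state the response rule and prove the one-vertex-per-pair invariant by induction on the number of moves; (iii) conclude domination from the closed-neighborhood covering condition. Step (ii), keeping the invariant consistent through \St's ``free'' moves and \Dom's deferred replies, is the one requiring the most care.
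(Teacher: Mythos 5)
Your proposal is correct and follows essentially the same route as the paper: \Dom\ plays the classical pairing response as second player (answer each \St\ move on a pair by taking its partner, play arbitrarily otherwise), concludes domination from the covering condition $V=\bigcup_i N[u_i]\cap N[v_i]$, and reduces the first-player case to the second-player case via the imagination strategy of Proposition~\ref{prop:imagination}. The bookkeeping you flag in step (ii) is sound --- since the pairs are vertex-disjoint, \St\ can never claim both vertices of a pair before \Dom\ responds --- and the paper handles it the same way, only more tersely.
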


\begin{proof}
If $G$ admits a pairing dominating set, then \Dom\ applies the following strategy as a second player: each time \St\ occupies a vertex of a pair $(u_i,v_i)$ for some $i$, \Dom\ answers by occupying the other vertex of the same pair if it is not yet occupied. Otherwise, \Dom\ plays randomly. By definition of a pairing dominating set, it ensures that the vertices chosen by \Dom\ form a dominating set of $G$.
Hence \Dom has a strategy as second player, thus also as first player using Proposition \ref{prop:imagination}. 
\end{proof}

This result induces the following corollary that ensures a winning strategy for \Dom\ as a first player.

\begin{corollary}\label{cor:pds}
Given a graph $G$, if there exists a vertex $u$ of $G$ such that $G\setminus N[u]$ admits a pairing dominating set, then $\mathcal{N} \preceq o(G)$.
\end{corollary}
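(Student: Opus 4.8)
The plan is to exhibit an explicit winning strategy for \Dom\ as the first player, since by the ordering $\mathcal S \prec \mathcal N \prec \mathcal D$ the statement $\mathcal N \preceq o(G)$ is precisely the assertion that \Dom\ wins when he moves first. Let $u$ be the vertex whose existence is assumed, write $H = G \setminus N[u]$, and let $\{(u_1,v_1),\dots,(u_k,v_k)\}$ be a pairing dominating set of $H$.

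First I would have \Dom\ open the game by occupying $u$. This single move dominates every vertex of $N[u]$, so from that point on \Dom\ only needs to take care of the vertices of $H$. After this opening move \St\ is to play, so on the remaining board \Dom\ is exactly in the situation of the \emph{second} player. I would therefore have him run the pairing strategy of Proposition~\ref{prop:pds} on $H$: whenever \St\ occupies a vertex of some pair $(u_i,v_i)$ whose partner is still free, \Dom\ answers by occupying that partner; in every other case (\St\ plays a vertex of $N[u]$, or a vertex lying in no pair, or a pair-vertex whose partner is already taken) \Dom\ plays an arbitrary free vertex.

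The key step is to check that this guarantees \Dom\ at least one vertex of each pair. For a fixed pair, consider the first move touching either of its two vertices: if that move is \Dom's own arbitrary move, he already holds a vertex of the pair; if it is \St's move, then the partner is still free at that instant and \Dom's pairing response claims it. In both cases \St\ can never obtain both vertices of a pair, so the set of vertices occupied by \Dom\ meets every pair. By the defining property of a pairing dominating set this dominates all of $V(H)$ \emph{within} $H$, and since $N_H[w] \subseteq N_G[w]$ for every vertex $w$, domination inside $H$ lifts to domination inside $G$. Together with the vertices of $N[u]$ dominated by the opening move $u$, this covers $V(G) = N[u]\cup V(H)$, and \Dom\ wins.

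The only delicate point, and the part I would state most carefully, is the bookkeeping around \Dom's arbitrary moves and the turn parity: I must justify that reducing to the second-player pairing argument is legitimate even though \Dom\ moved first globally. This is settled by the observation above that \St\ is the first to move on the board $V(G)\setminus\{u\}$, so the pairing responses are available exactly when required; moreover the arbitrary moves can never break the ``one vertex per pair'' invariant, as they only add vertices to \Dom's side. I expect no genuine obstacle beyond formulating these invariants cleanly, since the substance of the pairing argument is already provided by Proposition~\ref{prop:pds}.
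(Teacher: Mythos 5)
Your proposal is correct and follows the paper's own argument exactly: \Dom\ opens by occupying $u$ and then plays the pairing strategy of Proposition~\ref{prop:pds} as second player on $G\setminus N[u]$, the parity working out precisely because \St\ moves first on the remaining board. The invariant-checking you add (one vertex per pair, arbitrary moves being harmless, and $N_H[w]\subseteq N_G[w]$ lifting domination from the subgraph to $G$) is a sound elaboration of details the paper leaves implicit.
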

\begin{proof}
If such a vertex exists, then \Dom\ starts and occupy it. He then applies his pairing strategy on $G\setminus N[u]$ as a second player to dominate the rest of the graph.
\end{proof}

From this property, a natural question that arises is the detection of graphs having a pairing dominating set. An example of such graphs is when the vertices of the graph can be partitioned into cliques of size at least $2$. In that case, a trivial pairing dominating set consists in choosing any two vertices in each clique. Note that the question of the existence of such a partition is often referred to as the {\em packing by cliques} problem (with cliques of size at least $2$). It was proved to be polynomial by Hell and Kirkpatrick in \cite{hell}. A particular case of this decomposition is when the graph admits a perfect matching. As an example, Proposition~\ref{prop:pds} ensures that paths or cycles of even size are $\mathcal{D}$ as they have a perfect matching. \\

\begin{remark}
\label{rem:pairing}
The condition of Proposition~\ref{prop:pds} is not necessary. Indeed, the graphs of Figure~\ref{fig:D_no_pairing} are examples with outcome $\mathcal{D}$ and it can be shown that they do not admit a pairing dominating set. Yet, we will see in Section 5 two families of graphs (cographs and trees) for which there is an equivalence between the existence of a winning strategy for \Dom\ and the existence of a pairing dominating set.
\end{remark}

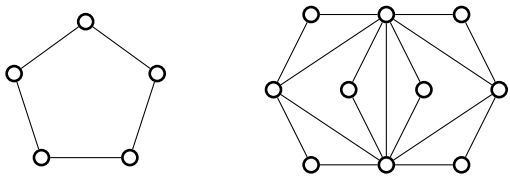
\begin{figure}[ht]
\centering
\begin{tikzpicture}

\node at (0,0){
\begin{tikzpicture}

    \foreach \thet in {18,90,162,234,306}{
        \node[noeud] (\thet) at (\thet:1){};
    }
    \draw (18) -- (90) -- (162) -- (234) -- (306) -- (18);
\end{tikzpicture}
};

\node at (4,0){
\begin{tikzpicture}
\tikzstyle{every node}=[noeud]
\node(1) at (-1.5,0) {};
\node(2) at (-0.5,0) {};
\node(3) at (0.5,0) {};
\node(4) at (1.5,0) {};
\node(5) at (-1,-1) {};
\node(6) at (0,-1) {};
\node(7) at (1,-1) {};
\node(8) at (-1,1) {};
\node(9) at (0,1) {};
\node(10) at (1,1) {};

\draw (1)--(5)--(6)--(1)--(8)--(9)--(1) (9)--(6)--(2)--(9)--(10)--(4)--(9)--(3)--(6)--(7)--(4)--(6);
\end{tikzpicture}
};

\end{tikzpicture}
\caption{Graphs with outcome $\mathcal D$ and without a pairing dominating set.}
\label{fig:D_no_pairing}
\end{figure}

We conclude this section with a study of the complexity of the pairing dominating set problem.

\begin{theorem}
\label{thm:pair_dom}
Given a graph $G$, it is {\sc np}-complete to decide whether $G$ admits a pairing dominating set.
\end{theorem}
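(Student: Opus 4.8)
The plan is to prove NP-completeness in two parts: membership in NP, which is routine, and NP-hardness, which is the heart of the matter.

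\medskip

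\noindent\textbf{Membership in NP.} A pairing dominating set is a collection of at most $|V|/2$ disjoint pairs, so it constitutes a polynomial-size certificate. Given a candidate $\{(u_1,v_1),\ldots,(u_k,v_k)\}$, I would verify in polynomial time that all the listed vertices are distinct and that $\bigcup_{i=1}^k N[u_i]\cap N[v_i]=V$, simply by computing each intersection $N[u_i]\cap N[v_i]$ and checking that the union covers every vertex. Hence the problem lies in NP.

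\medskip

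\noindent\textbf{NP-hardness: the reduction.} The plan is to reduce from a known NP-complete problem whose combinatorial structure matches the ``cover every vertex by one of a fixed menu of sets, using disjoint resources'' flavor of pairing domination. The natural candidates are \textsc{3-Dimensional Matching} or \textsc{Exact Cover by 3-Sets}, since those problems already build in the disjointness (every element used at most once) that the ``all vertices are distinct'' clause of the definition enforces; alternatively one could try \textsc{Perfect Matching}-style or \textsc{SAT}-style gadgets. First I would fix such a source problem, then design vertex gadgets so that legal choices of a pair $(u_i,v_i)$ correspond to selecting a set in the cover, with $N[u_i]\cap N[v_i]$ being exactly the elements that set is allowed to cover. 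The key design constraint is twofold: a pair must dominate precisely the intended targets (controlled by making $u_i$ and $v_i$ share only the desired common closed neighborhood), and the disjointness requirement on the paired vertices must force the selected sets to behave like an exact/disjoint cover. One typically adds forced pairs or pendant gadgets to pin down which vertices \emph{must} be paired, so that the remaining freedom encodes exactly the source instance's choices.

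\medskip

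\noindent\textbf{Correctness and the main obstacle.} With the gadget in hand I would argue the two directions: a solution to the source instance yields a pairing dominating set by translating each chosen set into its corresponding pair and checking that the union of pairwise intersections covers $V$; conversely, any pairing dominating set of the constructed graph can be normalized (arguing that the forced gadgets must be paired in the intended way, so that no ``cheating'' pairs help) and then read back as a solution to the source instance. The hard part will be controlling the intersections $N[u]\cap N[v]$: because a pair dominates only what its two closed neighborhoods \emph{share}, a vertex is easy to dominate ``accidentally'' via an unintended pair, and the disjointness-of-vertices condition interacts subtly with which pairs remain available. Getting the gadget to both (i) make the intended pairs the only efficient way to cover certain private vertices and (ii) rule out spurious pairings that would let \textsc{Dominator} cover $V$ without corresponding to a valid cover is where the real work lies; I expect to spend most of the proof verifying that no unintended pairing dominating set exists.
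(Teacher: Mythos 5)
Your membership argument is fine, but the hardness half of your proposal is a plan rather than a proof: you never fix the source problem, never define the graph produced by the reduction, and never argue either direction of correctness. The step you yourself flag as ``where the real work lies'' --- building gadgets that (i) force certain pairs and (ii) exclude all spurious pairing dominating sets --- is precisely the mathematical content of the theorem, and it is absent. In addition, your leading candidates, \textsc{3-Dimensional Matching} and \textsc{Exact Cover by 3-Sets}, are a questionable fit: the disjointness in a pairing dominating set constrains only the \emph{vertices used in the pairs} (the resources), while the covered sets $N[u_i]\cap N[v_i]$ may overlap arbitrarily, so pairing domination is a covering problem, not an exact-cover problem. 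The exactness that makes X3C/3DM hard has no natural counterpart here, so you would be fighting the encoding from the start.

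For comparison, the paper reduces from \textsc{3-sat}. Each clause $C_j$ becomes a single vertex $c_j$; each variable $X_i$ becomes a seven-vertex gadget consisting of two triangles $x_iy_iz_i$ and $x'_iy'_iz'_i$ joined by a vertex $t_i$ adjacent to $x_i$ and $x'_i$, with $c_j$ adjacent to $x_i,y_i$ (resp.\ $x'_i,y'_i$) when $X_i$ occurs positively (resp.\ negatively) in $C_j$. The engine of the converse direction is exactly the normalization you gesture at, and it is driven by the degree-$2$ vertices $t_i$, $z_i$, $z'_i$: each admits only a handful of ways to be pairing dominated, and a sequence of local replacement arguments (e.g.\ swapping a pair $(x_i,z_i)$ for $(x_i,y_i)$, using $N[u]\cap N[y_i]\subseteq N[x_i]\cap N[y_i]$, and discarding pairs like $(c_j,c_l)$ whose contribution is already covered) transforms any pairing dominating set into one selecting, for each variable, either the pair $(x_i,y_i)$ or $(x'_i,y'_i)$; setting $X_i$ true exactly when $(x_i,y_i)$ is selected then satisfies every clause, since each $c_j$ must be dominated through one such pair. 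Your proposal would need a concrete gadget of this kind together with such a normalization lemma; until those are supplied, only membership in \textsc{np} has been established.
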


\begin{proof}
Let $G=(V,E)$ be a graph. By definition, the problem is clearly in  {\sc np}. It remains to prove the  {\sc np}-hardness of the problem by reducing it from {\sc 3-sat}. Let $F= C_1 \vee \cdots \vee C_m$ be an instance of {\sc 3-sat} over the variables $X_1,\ldots,X_n$. Without loss of generality, one can assume that all the variables appear in both their positive and negative version in $F$, but not in the same clause. From $F$, we build the following graph $G$ as illustrated by Figure ~\ref{fig:gadget}.
\begin{itemize}
\item Each clause $C_j$, $1\leq j\leq m$, is associated to a vertex $c_j$. 
\item Each variable $X_i$, $1\leq i\leq n$ is associated to a gadget over seven vertices $\{x_i,y_i,z_i,x'_i,y'_i,z'_i,t_i\}$ such that $x_iy_iz_i$ and $x'_iy'_iz'_i$ are two triangles, and $t_i$ is adjacent to both $x_i$ and $x'_i$. The pairs $(x_i,y_i)$ and $(x'_i,y'_i)$ will be denoted $e_i$ and $\overline{e_i}$ respectively.
\item For each variable $X_i$ and clause $C_j$, we add the two edges $c_jx_i$ and $c_jy_i$ (resp. $c_jx'_i$ and $c_jy'_i$) if $X_i$ appears in clause $C_j$ in its positive (resp. negative) form.
\end{itemize}

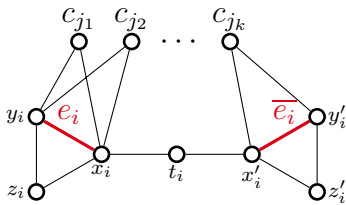
\begin{figure}[ht]
\centering
\begin{tikzpicture}

\node[noeud] (x) at (-1,0){};
\node[noeud] (y) at (-1.866,0.5){};
\node[noeud] (z) at (-1.866,-0.5){};

\node[below] at (x){\scriptsize $x_i$};
\node[left] at (y){\scriptsize $y_i$};
\node[left] at (z){\scriptsize $z_i$};

\node[noeud] (t) at (0,0){}; 

\node[below] at (t){\scriptsize $t_i$};

\node[noeud] (x') at (1,0){};
\node[noeud] (y') at (1.866,0.5){};
\node[noeud] (z') at (1.866,-0.5){};

\node[below] at (x'){\scriptsize $x'_i$};
\node[right] at (y'){\scriptsize $y'_i$};
\node[right] at (z'){\scriptsize $z'_i$};

\node[noeud] (c1) at (-1.3,1.5){};
\node[noeud] (c2) at (-0.6,1.5){};
\node[noeud] (c3) at (0.71,1.5){};

\node at (0.055,1.5) {$\cdots$};

\node[above] at (c1){$c_{j_1}$};
\node[above] at (c2){$c_{j_2}$};
\node[above] at (c3){$c_{j_k}$};

\draw[very thick, color = rougejoli] (x) -- (y) node[midway, above] {$e_i$};
\draw[very thick, color = rougejoli] (x') -- (y') node[midway, above] {$\overline{e_i}$};
\draw (y)--(z) -- (x) -- (t) -- (x') -- (z')--(y');
\draw (x) -- (c1) -- (y) --(c2) -- (x);
\draw (x') -- (c3) -- (y');

\end{tikzpicture}
\caption{Gadget around a variable $X_i$ for the proof of NP-completeness. The clauses $C_{j_1},\ldots,C_{j_k}$ are those where the variable $X_i$ appears.}
\label{fig:gadget}
\end{figure}

We first claim that any assignment of the variables $X_1,\ldots,X_n$ that makes $F$ satisfiable induces a pairing dominating set in $G$. Let $\sigma$ be such an assignment. We build the following set $D$ of pairs of vertices: for each variable $X_i$, we add the pairs $\{(x_i,y_i),(t_i,x'_i),(y'_i,z'i)\}$ to $D$ if $X_i$ is {\sc true} in $\sigma$ , and the pairs $\{(x'_i,y'_i),(t_i,x_i),(y_i,z_i)\}$ otherwise. It now suffices to check that $D$ is a pairing dominating set. First of all, one can easily remark that all the vertices of the gadgets (i.e., vertices different from the clauses $c_j$) are pairing dominated by $D$. In addition, as each clause $C_j$ is satisfied by $\sigma$, each vertex $c_j$ is adjacent to at least one pair $(x_i,y_i)$ or $(x'_i,y'_i)$ of $D$. Hence any choice of vertex in such a pair allows to dominate $c_j$. \\

Now consider a pairing dominating set $D$ of $G$. We first show that for each gadget associated a variable $X_i$, up to symmetry, there are only four cases to pairing dominate the vertices $t_i$, $z_i$ and $z'_i$, depicted by Figure~\ref{fig:pair_dom_gadget}. Indeed, since each vertex $t_i$ has degree $2$, there are three cases for it to be pairing dominated by $D$: either the pair $(t_i,x'_i)$, or $(t_i,x_i)$, or $(x_i,x'_i)$  must belong to $D$.

{\bf (i)} The pair $(t_i,x'_i)$ belongs to $D$. Then, by considering the vertex $z'_i$, of degree $2$, the pair $(y'_i,z'_i)$ must belong to $D$. Concerning the vertex $z_i$, it is necessarily dominated by vertices from the triangle $x_iy_iz_i$, leading to the three cases $(a)$, $(b)$ and $(c)$ of Figure~\ref{fig:pair_dom_gadget}.

{\bf(ii)} The pair $(t_i,x_i)$ belongs to $D$. By symmetry of the gadget, this case is similar to the previous one and we get the symmetric pairs from Figures $(a)$, $(b)$ and $(c)$.

{\bf(iii)} The pair $(x_i,x'_i)$ belongs to $D$ (Figure~\ref{fig:pair_dom_gadget} (d)). Then both vertices $z_i$ and $z'_i$ must belong to $D$ in the pairs $(y_i,z_i)$ and $(y'_i,z'_i)$.

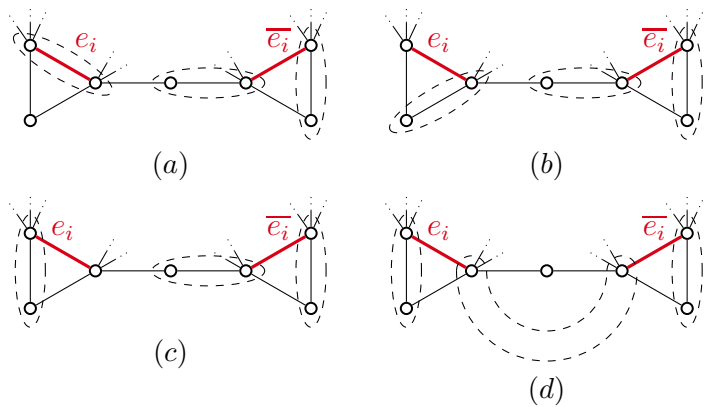
\begin{figure}[ht]
\centering
\begin{tikzpicture}

\node at (0,0){
\begin{tikzpicture}
\clip (-2.2,-1.9) rectangle (2.2,2.2);

\node[petit_noeud] (x) at (-1,0){};
\node[petit_noeud] (y) at (-1.866,0.5){};
\node[petit_noeud] (z) at (-1.866,-0.5){};

\node[petit_noeud] (t) at (0,0){}; 

\node[petit_noeud] (x') at (1,0){};
\node[petit_noeud] (y') at (1.866,0.5){};
\node[petit_noeud] (z') at (1.866,-0.5){};

\draw[very thick, color = rougejoli] (x) -- (y) node[midway, above right] {$e_i$};
\draw[very thick, color = rougejoli] (x') -- (y') node[midway, above] {$\overline{e_i}$};
\draw (y)--(z) -- (x) -- (t) -- (x') -- (z')--(y');
\draw[dashed] (1.866,0) ellipse (0.2 and 0.75);
\draw[dashed] (0.5,0) ellipse (0.75 and 0.2);
\begin{scope}[shift= {(-1,0)}]
    \begin{scope}[shift={(150:0.5)},rotate = 150]
        \draw[dashed] (0,0) ellipse (0.75 and 0.2);
    \end{scope}
\end{scope}

\begin{scope}[shift={(-1,0)}]
    \foreach \thet in {30,60}{
        \draw (x) --(\thet :0.3);
        \draw[dotted] (\thet :0.3) -- (\thet : 0.5);
    }
\end{scope}

\begin{scope}[shift={(-1.866,0.5)}]
    \foreach \thet in {65,90,125}{
        \draw (y) --(\thet :0.3);
        \draw[dotted] (\thet :0.3) -- (\thet : 0.5);
    }
\end{scope}

\begin{scope}[shift={(1,0)}]
    \foreach \thet in {120,150}{
        \draw (x') --(\thet :0.3);
        \draw[dotted] (\thet :0.3) -- (\thet : 0.5);
    }
\end{scope}

\begin{scope}[shift={(1.866,0.5)}]
    \foreach \thet in {65,90,125}{
        \draw (y') --(\thet :0.3);
        \draw[dotted] (\thet :0.3) -- (\thet : 0.5);
    }
\end{scope}

\node at (0,-1.1) {\small $(a)$};

\end{tikzpicture}
};

\node at (5,0){
\begin{tikzpicture}
\clip (-2.2,-1.9) rectangle (2.2,2.2);

\node[petit_noeud] (x) at (-1,0){};
\node[petit_noeud] (y) at (-1.866,0.5){};
\node[petit_noeud] (z) at (-1.866,-0.5){};

\node[petit_noeud] (t) at (0,0){}; 

\node[petit_noeud] (x') at (1,0){};
\node[petit_noeud] (y') at (1.866,0.5){};
\node[petit_noeud] (z') at (1.866,-0.5){};

\draw[very thick, color = rougejoli] (x) -- (y) node[midway, above] {$e_i$};
\draw[very thick, color = rougejoli] (x') -- (y') node[midway, above] {$\overline{e_i}$};
\draw (y)--(z) -- (x) -- (t) -- (x') -- (z')--(y');

\draw[dashed] (1.866,0) ellipse (0.2 and 0.75);
\draw[dashed] (0.5,0) ellipse (0.75 and 0.2);
\begin{scope}[shift= {(-1,0)}]
    \begin{scope}[shift={(-150:0.5)},rotate = -150]
        \draw[dashed] (0,0) ellipse (0.75 and 0.2);
    \end{scope}
\end{scope}

\begin{scope}[shift={(-1,0)}]
    \foreach \thet in {30,60}{
        \draw (x) --(\thet :0.3);
        \draw[dotted] (\thet :0.3) -- (\thet : 0.5);
    }
\end{scope}

\begin{scope}[shift={(-1.866,0.5)}]
    \foreach \thet in {65,90,125}{
        \draw (y) --(\thet :0.3);
        \draw[dotted] (\thet :0.3) -- (\thet : 0.5);
    }
\end{scope}

\begin{scope}[shift={(1,0)}]
    \foreach \thet in {120,150}{
        \draw (x') --(\thet :0.3);
        \draw[dotted] (\thet :0.3) -- (\thet : 0.5);
    }
\end{scope}

\begin{scope}[shift={(1.866,0.5)}]
    \foreach \thet in {65,90,125}{
        \draw (y') --(\thet :0.3);
        \draw[dotted] (\thet :0.3) -- (\thet : 0.5);
    }
\end{scope}

\node at (0,-1.1) {\small $(b)$};

\end{tikzpicture}
};

\node at (0,-2.5){
\begin{tikzpicture}
\clip (-2.2,-1.9) rectangle (2.2,2.2);

\node[petit_noeud] (x) at (-1,0){};
\node[petit_noeud] (y) at (-1.866,0.5){};
\node[petit_noeud] (z) at (-1.866,-0.5){};

\node[petit_noeud] (t) at (0,0){}; 

\node[petit_noeud] (x') at (1,0){};
\node[petit_noeud] (y') at (1.866,0.5){};
\node[petit_noeud] (z') at (1.866,-0.5){};

\draw[very thick, color = rougejoli] (x) -- (y) node[midway, above] {$e_i$};
\draw[very thick, color = rougejoli] (x') -- (y') node[midway, above] {$\overline{e_i}$};
\draw (y)--(z) -- (x) -- (t) -- (x') -- (z')--(y');

\draw[dashed] (-1.866,0) ellipse (0.2 and 0.75);
\draw[dashed] (0.5,0) ellipse (0.75 and 0.2);
\draw[dashed] (1.866,0) ellipse (0.2 and 0.75);

\begin{scope}[shift={(-1,0)}]
    \foreach \thet in {30,60}{
        \draw (x) --(\thet :0.3);
        \draw[dotted] (\thet :0.3) -- (\thet : 0.5);
    }
\end{scope}

\begin{scope}[shift={(-1.866,0.5)}]
    \foreach \thet in {65,90,125}{
        \draw (y) --(\thet :0.3);
        \draw[dotted] (\thet :0.3) -- (\thet : 0.5);
    }
\end{scope}

\begin{scope}[shift={(1,0)}]
    \foreach \thet in {120,150}{
        \draw (x') --(\thet :0.3);
        \draw[dotted] (\thet :0.3) -- (\thet : 0.5);
    }
\end{scope}

\begin{scope}[shift={(1.866,0.5)}]
    \foreach \thet in {65,90,125}{
        \draw (y') --(\thet :0.3);
        \draw[dotted] (\thet :0.3) -- (\thet : 0.5);
    }
\end{scope}

\node at (0,-1.1) {\small $(c)$};

\end{tikzpicture}
};

\node at (5,-2.5){
\begin{tikzpicture}
\clip (-2.2,-1.9) rectangle (2.2,2.2);
\node[petit_noeud] (x) at (-1,0){};
\node[petit_noeud] (y) at (-1.866,0.5){};
\node[petit_noeud] (z) at (-1.866,-0.5){};

\node[petit_noeud] (t) at (0,0){}; 

\node[petit_noeud] (x') at (1,0){};
\node[petit_noeud] (y') at (1.866,0.5){};
\node[petit_noeud] (z') at (1.866,-0.5){};

\draw[very thick, color = rougejoli] (x) -- (y) node[midway, above] {$e_i$};
\draw[very thick, color = rougejoli] (x') -- (y') node[midway, above] {$\overline{e_i}$};
\draw (y)--(z) -- (x) -- (t) -- (x') -- (z')--(y');

\draw[dashed] (-1.866,0) ellipse (0.2 and 0.75);
\draw[dashed] (1.866,0) ellipse (0.2 and 0.75);
\draw[dashed] (-1.2,0) arc (180:0:0.2);
\draw[dashed] (0.8,0) arc (180:0:0.2);
\draw[dashed] (-1.2,0) arc (-180:0:1.2);
\draw[dashed] (-0.8,0) arc (-180:0:0.8);

\begin{scope}[shift={(-1,0)}]
    \foreach \thet in {30,60}{
        \draw (x) --(\thet :0.3);
        \draw[dotted] (\thet :0.3) -- (\thet : 0.5);
    }
\end{scope}

\begin{scope}[shift={(-1.866,0.5)}]
    \foreach \thet in {65,90,125}{
        \draw (y) --(\thet :0.3);
        \draw[dotted] (\thet :0.3) -- (\thet : 0.5);
    }
\end{scope}

\begin{scope}[shift={(1,0)}]
    \foreach \thet in {120,150}{
        \draw (x') --(\thet :0.3);
        \draw[dotted] (\thet :0.3) -- (\thet : 0.5);
    }
\end{scope}

\begin{scope}[shift={(1.866,0.5)}]
    \foreach \thet in {65,90,125}{
        \draw (y') --(\thet :0.3);
        \draw[dotted] (\thet :0.3) -- (\thet : 0.5);
    }
\end{scope}

\node at (0,-1.6) {\small $(d)$};

\end{tikzpicture}
};

\end{tikzpicture}
\caption{Possible pair dominating sets for the gadget of the proof of Theorem~\ref{thm:pair_dom} (up to symmetry).}
\label{fig:pair_dom_gadget}
\end{figure}

In order to find an assignment for $F$, we now show that $D$ can be transformed into a pairing dominating set where each pair is as in Figure~\ref{fig:pair_dom_gadget} (a) (or its symmetrical, according to case $(ii)$). Consider first that for the gadget associated to some variable $X_i$, the pairs of $D$ are those depicted by Figure~\ref{fig:pair_dom_gadget} $(b)$. As the vertex $z_i$ has no other neighbor than $x_i$ and $y_i$, replacing a pair $(x_i,z_i)$ by the pair $(x_i,y_i)$ in $D$ remains a valid pairing dominating set since both $x_i$ and $y_i$ are adjacent to $z_i$. This operation is clearly possible if $y_i$ is not in $D$. In the case where $y_i$ is already in $D$, say in a pair $(y_i,u)$, remark that removing this pair from $D$ does not break the pairing dominating property of $D$ if $(x_i,y_i)$ is added. Indeed, since, by definition of $G$, $x_i$ and $y_i$ have the same neighborhood (except $t_i$, that is already in a pair), we have that $N[u]\cap N[y_i]\subseteq N[x_i]\cap N[y_i]$. Since $x_i$ and $y_i$ play a symmetrical role, we can use the same argument to replace the pairs of Figure~\ref{fig:pair_dom_gadget} $(c)$ by those of $(a)$ in $D$. The last case is when the pairs of $D$ are those of Figure~\ref{fig:pair_dom_gadget} $(d)$ for the variable $X_i$. Since $N[y_i]\cap N[z_i]\subseteq N[y_i]\cap N[x_i]$ and $N[x_i]\cap N[x'_i] = \{t_i\} \subset N[t_i]\cap N[x'_i]$ (as $X_i$ and $\overline{X_i}$ cannot be in the same clause), we can replace the pairs of Figure~\ref{fig:pair_dom_gadget} $(d)$ by those of Figure~\ref{fig:pair_dom_gadget} $(a)$ without breaking the pairing dominating property of $D$. In case $t_i$ was already in $D$, say in a pair $(t_i,u)$, once again this pair can be removed from $D$ as $N[t_i]\cap N[u]$ is either empty or at most a subset of $\{x_i,x'_i\}$, which is already pairing dominated by the pairs of Figure~\ref{fig:pair_dom_gadget} $(a)$.

Hence we have transformed $D$ such that all the vertices different from the $c_j$ are pairing dominated by the pairs of vertices of Figure~\ref{fig:pair_dom_gadget} $(a)$. In addition, if $D$ admits other pairs than those depicted by Figure~\ref{fig:pair_dom_gadget} $(a)$, then these pairs are necessarily of the form $(c_j,c_l)$, $(z_i,u)$, or $(z'_i,u)$. The last two types of pairs can be removed from $D$ as $N[z_i]$ and $N[z'_i]$ are already pairing dominated. Concerning the pairs $(c_j,c_l)$, they can also be removed from $D$ as the sets $N[c_j]\cap N[c_l]$ belong to the gadgets (and are different from the clause vertices), and are thus already pairing dominated. \\

We now build the following assignment of the variables of $F$: for all $1\leq i \leq n$, the variable $X_i$ is set to {\sc true} if and only if the pair $e_i$ belongs to $D$. As each vertex $c_j$ is pairing dominated in $D$ by at least a pair $e_i$ or $\overline{e_i}$ for some $i$, it means that each corresponding clause $C_j$ has at least a variable equal to {\sc true}, which concludes the proof.

\end{proof}

\section{Graph operations}

In the first part of this section, we study the outcome of operations of graphs for which the outcome is already known. This will lead to polynomial time algorithms to solve the \MB\ on cographs and forests, as these families can be built from joins, unions and by adjoining pendant edges.

\subsection{Union and join}

Let $G=(V_G,E_G)$ and $H=(V_H,E_H)$ be disjoint graphs. The \emph{union} $G \cup H$ of $G$ and $H$ is the graph with vertex set $V_G \cup V_H$ and edge set $E_G \cup E_H$.
The \emph{join} $G \bowtie H$ of $G$ and $H$ is the graph with vertex set  $V_G \cup V_H$ and edge set $E_G \cup E_H \cup \{uv|u\in V_G, v\in V_H\}$.

\begin{theorem}
\label{thm:union}
Let $G$ and $H$ be two starting positions of the \MB .
	\begin{itemize}
	\item If $o(G)=\mathcal S$ or $o(H)=\mathcal S$ then $o(G\cup H)=\mathcal S$.
	\item If $o(G)=o(H) = \mathcal N$ then $o(G\cup H)=\mathcal S$.
	\item If $o(G)=o(H) = \mathcal D$ then $o(G\cup H)=\mathcal D$.
	\item Otherwise, $o(G\cup H)=\mathcal N$.
	\end{itemize}
\end{theorem}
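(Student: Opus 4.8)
The plan is to treat $G \cup H$ as a disjoint sum of two boards and to exploit the two structural facts that govern the union: since there are no edges between the parts, \Dom\ dominates $G\cup H$ if and only if he dominates $G$ and $H$ \emph{separately}, whereas \St\ wins as soon as she isolates a vertex in \emph{one} of the two parts. Thus the union behaves like a conjunction for \Dom\ and a disjunction for \St. The main auxiliary tool I will use repeatedly is a strengthening of Proposition~\ref{prop:imagination}: making an additional move (equivalently, letting the opponent pass) can never harm a player, since a player handed a free token may always pretend it was not placed and continue his/her winning strategy, exactly as in the proof of Proposition~\ref{prop:imagination}. I will isolate this \emph{free-move principle} as a short lemma so that it can be cited uniformly; it is what lets me absorb the parity/tempo mismatches that arise when \St\ jumps from one part to the other.

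First, if $o(G)=\mathcal S$ (the case $o(H)=\mathcal S$ being symmetric), \St\ runs her winning strategy on $G$ and ignores $H$: she answers \Dom's $G$-moves according to her $G$-strategy, and treats each \Dom-move in $H$ as a pass in $G$, spending a free extra $G$-move which cannot hurt her. She isolates a vertex of $G$ regardless of who starts, so $o(G\cup H)=\mathcal S$. Second, if $o(G)=o(H)=\mathcal D$, I let \Dom\ play \emph{second} and stay purely responsive: to each \St-move in a part $K\in\{G,H\}$ he answers inside $K$ with his second-player winning strategy there, so he plays second in each part and dominates both. The only delicate point is when \St\ fills the last vertex of some $K$ so that \Dom\ cannot answer in $K$; but then his second-player strategy on $K$ has already dominated it (\St's final token cannot contribute to domination), so \Dom\ is free and dumps the move elsewhere. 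Hence \Dom\ wins playing second, and since no position has outcome $\mathcal P$ (Proposition~\ref{prop:imagination}), $o(G\cup H)=\mathcal D$.

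Third, if $o(G)=o(H)=\mathcal N$, I claim $o(G\cup H)=\mathcal S$: \St\ commits to winning a single part by becoming its \emph{first} player. If she starts, she opens in $G$ and runs her first-player winning strategy there, treating \Dom's $H$-moves as passes; if \Dom\ starts, say in $G$, she opens in $H$ and wins $H$ as its first player, again treating \Dom's $G$-moves as passes. Either way she isolates a vertex in the part she controls. The remaining ``otherwise'' case is exactly $\{o(G),o(H)\}=\{\mathcal N,\mathcal D\}$, say $o(G)=\mathcal N$ and $o(H)=\mathcal D$, and I claim $o(G\cup H)=\mathcal N$ by showing the next player always wins. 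If \Dom\ starts, he opens in $G$ with his first-player strategy (winning $G$ since $o(G)=\mathcal N$) and answers every later \St-move in the same part, so he is first player of $G$ and second player of $H$ and wins both, parity gaps being absorbed by the free-move principle; if \St\ starts, she proceeds as in the $\mathcal N\mathcal N$ analysis and wins $G$ going first. Together these give ``next player wins'', i.e.\ outcome $\mathcal N$.

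The main obstacle, and the only place real care is needed, is the bookkeeping of \emph{tempo}: because \St\ may switch parts at any moment, the naive ``respond in the same part'' rule occasionally asks a player to move in an already-full part, or grants a player two consecutive moves inside one part. Every such situation is resolved by combining the free-move principle with the observation that, in a part the player was already winning, the opponent's final token there cannot change its domination status; hence the player is either already safe or may harmlessly spend the move in the other part and imagine it unplayed. Proving this tempo lemma once, up front, reduces each of the four cases to a couple of lines.
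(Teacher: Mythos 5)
Your proposal is correct and follows essentially the same route as the paper: the same case split, component-wise strategies where moves in the other component are treated as passes absorbed by the imagination strategy of Proposition~\ref{prop:imagination}, and the same handling of a filled component by playing (and imagining unplayed) a vertex elsewhere. Your explicit ``free-move/tempo lemma'' is just a packaged form of what the paper invokes implicitly each time it cites Proposition~\ref{prop:imagination}.
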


This result is summarized in Table~\ref{tab:union}.
Note that the outcome $\mathcal S$ is absorbing for the union, while the outcome $\mathcal D$ is neutral.

\begin{table}[ht]
\centering
\begin{tabular}{|c||c|c|c|}
\hline
\diagbox{$o(G)$}{$o(H)$}  & ${\cal D}$ & ${\cal N}$ & ${\cal S}$ \\ \hline\hline
${\cal D}$ & ${\cal D}$ & ${\cal N}$  & ${\cal S}$ \\ \hline
${\cal N}$ & ${\cal N}$ & ${\cal S}$ & ${\cal S}$ \\ \hline
${\cal S}$ & ${\cal S}$ & ${\cal S}$ & ${\cal S}$ \\ \hline
\end{tabular}
\caption{Outcomes of the \MB\ played on the union of $G$ and $H$.}
\label{tab:union}
\end{table}

\begin{proof}
Assume \St\ has a winning strategy on $G$ or $H$. Then she has a winning strategy on $G \cup H$. Indeed, without loss of generality assume that she has a winning strategy on $G$. Her strategy on $G\cup H$ is to play only on $G$ following her winning strategy. If at some point \Dom\ is playing on $H$, this can be considered as a passing move in $G$ and by Proposition~\ref{prop:imagination} this does not compromise Staller's strategy. At some point she will isolate a vertex in $G$ and thus in $G\cup H$.

Thus if $G$ or $H$ has outcome $\mathcal{S}$, then whatever \Dom\ plays as a first move, \St\ still has a winning strategy on this graph. If both positions have outcome $\mathcal{N}$ then after \Dom 's first move, \St\ can play on the other component and also wins. This proves the first two points.

If both positions have outcome $\mathcal D$, then \Dom\ has a winning strategy on both graphs playing second. He can answer to every move of \St\ on the component she plays with his winning strategy on this component. If one of the graph is full, \Dom can play any vertex on the other graph and imagines he did not, as in the imagination strategy of Proposition \ref{prop:imagination} At the end, \Dom\ dominates both components and so $G \cup H$ has outcome $\mathcal{D}$.

Finally, assume without loss of generality that $o(G)=\mathcal N$ and $o(H)=\mathcal D$. If \St\ plays first, as in the first case, by applying her winning strategy as the first player in $G$ she will be able to isolate a vertex and to win. On the other hand, if \Dom\ plays first, he can play his winning move on $G$ and then answers to \St\ on the component she has played on with his winning strategy. So the first player has a winning strategy and the outcome is $\mathcal N$.

\end{proof}

\begin{theorem}
\label{thm:join}
Let $G$ and $H$ be two starting positions of the \MB .
\begin{itemize}
\item[(i)] If $G=K_1$ and $o(H) = \mathcal S$  (or $H=K_1$ and $o(G) = \mathcal S$)\\ then $o(G \bowtie H)=\mathcal N$.
\item[(ii)] Otherwise, $o(G \bowtie H)=\mathcal D$.
\end{itemize}

\end{theorem}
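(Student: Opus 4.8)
The plan is to exploit the defining structural feature of the join: a vertex of $V_G$ is adjacent in $G\bowtie H$ to every vertex of $V_H$, and vice versa. The first step is therefore to record the elementary observation that any set $D$ with $D\cap V_G\neq\emptyset$ and $D\cap V_H\neq\emptyset$ is automatically a dominating set of $G\bowtie H$, since a vertex chosen on one side dominates the whole of the other side. I would also note the two degenerate readings of this fact: when $G=K_1$ with vertex $v$, the singleton $\{v\}$ already dominates $G\bowtie H$ (as $v$ is adjacent to all of $V_H$ and to itself), and that for \St\ to isolate a vertex of $V_G$ she must occupy all of $V_H$ together with that vertex (symmetrically for $V_H$). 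This reduces \Dom's task to the combinatorial goal of owning one vertex on each side, and shows that \St\ needs at least $\min(|V_G|,|V_H|)+1$ moves before she can even threaten an isolation.

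With this in hand, the core of (ii) is the case $|V_G|\ge 2$ and $|V_H|\ge 2$. Here I would give \Dom\ the following reactive strategy as second player: to \St's first move, say on a vertex of $V_G$, \Dom\ answers on another (still available, as $|V_G|\ge2$) vertex of the \emph{same} side $V_G$, locking in ownership of a $V_G$-vertex; on his next move he takes any free vertex of $V_H$, which still exists because \St\ can have claimed at most one vertex of $V_H$ so far while $|V_H|\ge 2$. After these two moves \Dom\ owns a vertex on each side and has won by the observation, whereas \St\ has played only two moves and so cannot yet have isolated anything. Since \Dom\ wins as second player, Proposition~\ref{prop:imagination} upgrades this to $o(G\bowtie H)=\mathcal D$. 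Responding on the \emph{contested} side first is the crucial detail: answering on the opposite side would let \St\ exhaust a size-$2$ side.

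The remaining, more delicate, situation is when one side is a single vertex; say $G=K_1=\{v\}$ (the other orientation being symmetric). Everything hinges on a reduction: once $v$ has been claimed by \St, the board is exactly $V_H$, and from the closed-neighborhood description of domination one checks that \Dom\ wins the rest iff $D\cap V_H$ dominates $H$, while \St\ wins iff she isolates a vertex of $H$ — that is, the residual game is precisely the \MB\ on $H$ with \Dom\ to move. I would use this twice. If \St\ plays first and does not play $v$, then \Dom\ immediately takes $v$ and wins; if she plays $v$, the game becomes the \MB\ on $H$ with \Dom\ moving first. Hence, as second player, \Dom\ wins iff he wins that residual game as first player, which holds exactly when $o(H)\neq\mathcal S$; via Proposition~\ref{prop:imagination} this gives $o(G\bowtie H)=\mathcal D$ in case (ii). For case (i), where $o(H)=\mathcal S$: if \Dom\ starts he simply takes $v$ and wins outright, while if \St\ starts she takes $v$ and then wins the residual \MB\ on $H$ (since she wins there moving second); so the starting player always wins and $o(G\bowtie H)=\mathcal N$.

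The main obstacle I anticipate is not the strategy but making the $K_1$ reduction airtight: one must verify carefully, from the closed-neighborhood description of domination in the join, that after $v$ is removed the winning conditions collapse exactly onto those of the \MB\ on $H$ — in particular that the constraint ``$v$ is dominated'' is subsumed by ``$H$ is dominated'' because $H$ is non-empty. A secondary point to handle cleanly is the availability bookkeeping in the case $|V_G|,|V_H|\ge2$: ensuring the two vertices \Dom\ reaches for are always unoccupied, and that \St\ provably cannot isolate within two moves, which is exactly where the bound $\min(|V_G|,|V_H|)+1$ on \St's isolating time is used.
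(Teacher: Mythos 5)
Your proof is correct and follows essentially the same route as the paper: the same case split ($G=K_1$ versus both sides of size at least $2$), the same use of the universal vertex of the $K_1$ side, and the same reduction to the game on $H$ after \St\ takes $v$ (your residual-game analysis just makes explicit what the paper treats directly via \Dom 's first-player strategy on $H$). The only cosmetic difference is in the case $|V_G|,|V_H|\ge 2$, where the paper cites Proposition~\ref{prop:pds} with the pairing dominating set $\{(u_1,v_1),(u_2,v_2)\}$, while you give an equivalent direct two-move strategy backed by the counting bound $\min(|V_G|,|V_H|)+1$ on \St 's isolation time.
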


\begin{proof}
(i) Assume that $G= K_1$ and $o(H) = \mathcal S$. If \Dom\ starts, he will win by playing on the unique vertex of $G$ and dominates the join, so he has a winning strategy as a first player. However, since $o(H)= \mathcal S$, if \St\ starts, she can play on the only vertex of $G$ and then apply her winning strategy as second player on $H$. So she wins on $G \bowtie H$ as first player as well as \Dom and $o(G \bowtie H)=\mathcal N$.

(ii) Since we are not in the first case, there are two possibilities : Either both $G$ and $H$ have at least two vertices or, without loss of generality, $G=K_1$ and $o(H) \succeq \mathcal N$. 

Assume first that both $G$ and $H$ have more than two vertices.
Let $u_1$, $v_1$ be two vertices of $G$ and $u_2$, $v_2$ two vertices of $H$. Since every vertex of $G$ is a neighbor of every vertex of $H$ and conversely, $\{(u_1,v_1),(u_2,v_2)\}$ forms a pairing dominating set for $G \bowtie H$ and the outcome is $\mathcal D$ according to Proposition~\ref{prop:pds}.

Assume now that $G=K_1$ and $o(H) \succeq \mathcal N$. Note that \Dom\ has a winning strategy on $H$ as first player.
Assume that \St\ is the first player. If on her first move she does not play on the vertex of $G$, then \Dom\ wins immediately by playing on it. If she does play on it, then \Dom\ will apply his winning strategy as first player on $H$. This will allow him to dominate $H$ and, since each vertex of $H$ dominates $G$, all the vertices of $G \bowtie H$ will be dominated. \Dom\ has a winning strategy as second player, hence $o(G \bowtie H) = \mathcal D$.
\end{proof}

The combination of these two results gives a complexity result on the class of cographs. Recall that cographs (or $P_4$-free graphs) can be inductively built from a single vertex by taking the union of two cographs or the join of two cographs. In addition, from a given cograph, recovering this construction from unions and joins can be found with a linear time algorithm~\cite{corneil1985linear}. Since we know the outcome of \MB\ for $K_1$ and for the union and the join operators, we can deduce the following corollary.

\begin{corollary}
Deciding the outcome of the \MB\ on cographs can be done in polynomial time.
\end{corollary}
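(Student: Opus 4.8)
The plan is to exploit the recursive structure of cographs together with the outcome computations already established for the two building operations. Recall that every cograph admits a \emph{cotree}: a rooted binary tree whose leaves are the vertices of the graph and whose internal nodes are each labelled either $\cup$ (union) or $\bowtie$ (join), so that the cograph is obtained by applying the indicated operation to the two subgraphs rooted at the children. By the result of Corneil et al.~\cite{corneil1985linear}, this cotree can be recovered from the cograph in linear time. First I would invoke that algorithm to produce the cotree in polynomial (indeed linear) time; this reduces the outcome-computation problem to evaluating a labelled binary tree.

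Next I would compute the outcome $o(G)$ by a single bottom-up pass over the cotree. Each leaf corresponds to $K_1$, whose outcome is known (a single isolated vertex: the unique vertex is its own closed neighborhood, so whoever plays it dominates, giving outcome $\mathcal N$). At each internal node, the subgraph rooted there is either the union or the join of the two subgraphs rooted at its children, whose outcomes have already been computed. I would then read off the outcome of the node directly from Theorem~\ref{thm:union} (for a $\cup$-node, using Table~\ref{tab:union}) or from Theorem~\ref{thm:join} (for a $\bowtie$-node). Since both theorems express the outcome of the combined graph purely as a function of the outcomes of the two operands, each internal node is processed in constant time, and the whole pass runs in time linear in the number of nodes of the cotree, hence polynomial in $|V|$.

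The one point requiring a little care is that the hypotheses of Theorems~\ref{thm:union} and~\ref{thm:join} sometimes refer to more than just the outcome value. For the join, case (i) singles out the situation $G=K_1$ (or $H=K_1$), which is not determined by the outcome $\mathcal N$ alone; so when evaluating a $\bowtie$-node I would additionally track whether each subgraph is a single vertex. This is easily maintained in the bottom-up pass: a subgraph is $K_1$ exactly when the corresponding subtree is a single leaf, information available for free from the cotree. With that extra bit recorded at each node, every internal node can be resolved unambiguously by the two theorems.

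I expect the main (and only genuine) obstacle to be a bookkeeping one rather than a mathematical one: ensuring that the side conditions of the join theorem—specifically the distinction between ``outcome $\mathcal N$ arising from $K_1$'' and other sources of $\mathcal N$—are correctly threaded through the recursion. Once the auxiliary $K_1$-flag is carried along, correctness of the whole procedure follows immediately by induction on the cotree: the base case is the known outcome of $K_1$, and the inductive step is exactly one application of Theorem~\ref{thm:union} or Theorem~\ref{thm:join} to the children's already-correct outcomes. The polynomial running time is then immediate from the linear-time cotree construction and the constant work per node.
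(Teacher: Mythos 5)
Your proposal is correct and is essentially the paper's argument made explicit: the paper derives the corollary directly from the linear-time cotree construction of Corneil et al.\ together with Theorems~\ref{thm:union} and~\ref{thm:join} applied bottom-up, exactly as you describe. Your observation that a $K_1$-flag must be carried along (since case~(i) of the join theorem depends on more than the outcome $\mathcal N$) is a genuine but easily handled detail that the paper leaves implicit, and your treatment of it is correct.
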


As stated in Remark~\ref{rem:pairing}, for some families of graphs the outcome of a starting position is $\mathcal D$ if and only if it admits a pairing dominating set. We show that the family of cographs satisfies this property.

\begin{theorem}
A cograph $G$ has outcome $\mathcal D$ if and only if it admits a pairing dominating set.
\end{theorem}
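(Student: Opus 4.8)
The plan is to prove the nontrivial implication, namely that a cograph $G$ with $o(G)=\mathcal{D}$ admits a pairing dominating set; the converse is exactly Proposition~\ref{prop:pds} and in fact holds for every graph. I would argue by strong induction on $|V(G)|$, using the recursive cograph decomposition together with Theorems~\ref{thm:union} and~\ref{thm:join} to control the outcomes of the pieces. Since $o(K_1)=\mathcal{N}$, the hypothesis $o(G)=\mathcal{D}$ forces $|V(G)|\ge 2$, so $G$ is either a disjoint union or a join of two nonempty cographs.

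For the union case $G=G_1\cup G_2$, Theorem~\ref{thm:union} tells us that $o(G)=\mathcal{D}$ forces $o(G_1)=o(G_2)=\mathcal{D}$; the induction hypothesis supplies pairing dominating sets $D_1,D_2$, and since $G$ adds no edge between the two parts, $D_1\cup D_2$ is a pairing dominating set of $G$. For the join case $G=G_1\bowtie G_2$ I would split on sizes. If both parts have at least two vertices, the explicit set $\{(u_1,v_1),(u_2,v_2)\}$ with $u_i,v_i\in G_i$ works, exactly as in the proof of Theorem~\ref{thm:join}(ii). So the real work is when one part, say $G_1=K_1=\{w\}$, is a single (hence universal) vertex; here Theorem~\ref{thm:join} gives $o(G_2)\succeq\mathcal{N}$.

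If $o(G_2)=\mathcal{D}$, the induction hypothesis gives a pairing dominating set $D_2$ of $G_2$; since $w$ is universal, $w\in N[a]\cap N[b]$ for every pair $(a,b)\in D_2$, so $D_2$ already pairing-dominates $G$. The delicate case is $o(G_2)=\mathcal{N}$, where the induction hypothesis yields nothing directly, and this is the step I expect to be the main obstacle. I would resolve it by a structural dichotomy on $G_2$. If $G_2$ is connected then, being a cograph of outcome $\mathcal{N}$, Theorem~\ref{thm:join} forces it to have a universal vertex $w'$ (either $G_2=K_1$, or $G_2=K_1\bowtie C$ with $o(C)=\mathcal{S}$); then $w$ and $w'$ are two \emph{distinct} universal vertices of $G$, and the single pair $\{(w,w')\}$, whose closed neighborhoods both equal $V(G)$, is a pairing dominating set. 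If instead $G_2=A\cup B$ is disconnected, Theorem~\ref{thm:union} forces, after relabeling, $o(A)=\mathcal{D}$ and $o(B)=\mathcal{N}$; the induction hypothesis gives a pairing dominating set $D_A$ of $A$, while the subgraph induced on $\{w\}\cup V(B)$ is the strictly smaller cograph $K_1\bowtie B$, which has outcome $\mathcal{D}$ by Theorem~\ref{thm:join} and hence, by induction, a pairing dominating set $D'$. Since the vertex sets of $A$ and $\{w\}\cup V(B)$ are disjoint, $D_A\cup D'$ uses distinct vertices, and it pairing-dominates all of $G$.

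The only routine points needing care are that every recursive call is made on a strictly smaller cograph (each invoked part omits at least one vertex, using that both sides of a union and the part $A$ are nonempty), and that enlarging closed neighborhoods when passing from a piece to $G$ cannot destroy the pairing-domination property; the latter is immediate because $N[a]\cap N[b]$ can only grow, so any vertex covered in a subgraph remains covered in $G$.
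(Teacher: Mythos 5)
Your proof is correct and follows essentially the same route as the paper's: induction on $|V(G)|$ over the cograph decomposition, with the union case, the two-vertex join case, the $o(G_2)=\mathcal D$ case, and the delicate $G_1=K_1$, $o(G_2)=\mathcal N$ case split exactly as the paper does (universal pair $\{(w,w')\}$ when $G_2$ is a join, and the recursive call on $K_1\bowtie B$ when $G_2=A\cup B$). The only difference is cosmetic: you phrase the final dichotomy via connectivity of $G_2$, which for cographs is equivalent to the paper's join/union split, and you make explicit the $G_2=K_1$ subcase that the paper absorbs into its $n\le 2$ base case.
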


\begin{proof}
By Proposition~\ref{prop:pds} that if a graph admits a pairing dominating set, then it has outcome $\mathcal D$. It remains to prove that all cographs with outcome $\mathcal D$ admits a pairing dominating set.

The proof is done by induction on the number $n$ of vertices of $G$.

First note that the result is true when $n \leq 2$. The only such cographs are $K_1$, $K_2$ and $K_1\cup K_1$, and among them the only graph with outcome $\mathcal D$ is $K_2$. $K_2$ admits a perfect matching and thus a pairing dominating set.

Assume now that every cograph of outcome $\mathcal D$ with a number of vertices less or equal to $n$ admits a pairing dominating set. Let $G$ be a cograph of outcome $\mathcal D$ with $n+1$ vertices. By definition of a cograph, $G$ is either the union or the join of two smaller cographs.

If $G$ is the union of two cographs $G_1$ and $G_2$, they necessarily have outcome $\mathcal D$ by Theorem~\ref{thm:union}. By induction hypothesis, they both admit a pairing dominating set, which union is a pairing dominating set for $G$.

Assume now that $G$ is the join of two cographs $G_1$ and $G_2$.

If both $G_1$ and $G_2$ have more than two vertices, then if $u_1$, $v_1$ are any two vertices of $G_1$ and $u_2$, $v_2$ are any two vertices of $G_2$, $\{(u_1,v_1),(u_2,v_2)\}$ forms a pairing dominating set for $G$.

Assume now that $G_1 = K_1$ and let $x$ be its unique vertex. Then $G_2$ has either outcome $\mathcal N$ or $\mathcal D$ by Theorem~\ref{thm:join}. If $G_2$ has outcome $\mathcal D$ then by induction hypothesis, it admits a pairing dominating set. Every vertex of this pairing dominating set is a neighbor of $x$ and it remains also a pairing dominating set for $G$.

Assume now that $o(G_2) = \mathcal N$. $G_2$ is either the union of two cographs or the join of two cographs.

If $G_2$ is the join of two cographs, by Theorem~\ref{thm:join}, it must be the join of a graph $K_1$ with vertex $y$ and of a graph $H$ with outcome $\mathcal S$. Notice that $x$ and $y$ are both universal vertices so $\{(x,y)\}$ is a pairing dominating set for $G$.
If $G_2$ is the union of $H_1$ and $H_2$ then, without loss of generality, by Theorem~\ref{thm:union} $o(H_1)=\mathcal D$ and $o(H_2)=\mathcal N$. By induction hypothesis, $H_1$ admits a pairing dominating set $S_1$. Note also that by Theorem~\ref{thm:join}, $x \bowtie H_2$ has outcome $\mathcal D$, so by induction hypothesis it admits a pairing dominating set $S_2$. Since $S_1$ pairing dominates $H_1$ and $S_2$ pairing dominates $\{x\} \cup H_2$, $S_1 \cup S_2$ forms a pairing dominating set for $G$.
\end{proof}

\subsection{Glue operator and trees}

We now study the operator consisting of gluing two graphs on a vertex. This operator will be useful in the study of trees. A more formal definition is the following:

\begin{definition}
Let $G=(V_G,E_G)$ and $H=(V_H,E_H)$ be graphs and let $u \in V_G$ and $v\in V_H$ be two vertices. The \emph{glued graph} of $G$ and $H$ at $u$ and $v$ is the graph $G \glue{u}{v} H$ with vertex set $(V_G \setminus \{u\})\cup (V_H \setminus \{v\}) \cup \{w\} $ (where $w$ is a new vertex) and for which $xy$ is an edge if and only if $xy$ is an edge of $G$ or $H$ or $y=w$ and $xu$ is and edge of $G$ or $xv$ is an edge of $H$.
\end{definition}

If the vertex $u$ is clear from the context or does not matter, the glue will be denoted by $G \glue{}{v} H$. Similarly if the vertex $v$ is not useful in the notation, we might also remove it.
Figure~\ref{fig:glue} gives a representation of the glued of two graphs.

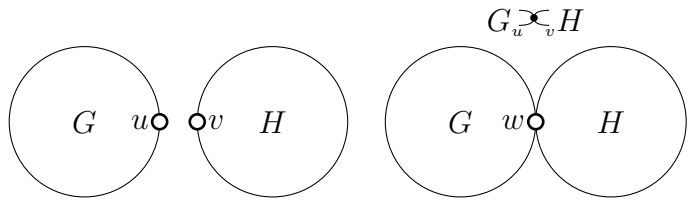
\begin{figure}[ht]
\centering

\begin{tikzpicture}

\draw (0,0) circle (1);
\draw (0,0) node {$G$};
\node[noeud] at (1,0) {};
\draw (1,0) node[left] {$u$};

\draw (2.5,0) circle (1);
\draw (2.5,0) node {$H$};
\node[noeud] at (1.5,0) {};
\draw (1.5,0) node[right] {$v$};

\draw (5,0) circle (1);
\draw (5,0) node {$G$};
\draw (7,0) circle (1);
\draw (7,0) node {$H$};
\draw (6,1.4) node {$G\glue{u}{v} H$};
\node[noeud] at (6,0) {};
\draw (6,0) node[left] {$w$};

\end{tikzpicture}

\caption{Representation of the glued graph of $G$ and $H$ on $u$ and $v$.}
\label{fig:glue}
\end{figure}

Let $H$ be a graph and $v$ a vertex of $H$. We say that the couple $(H,v)$ is \emph{neutral} for the glue operator if for every graph $G$ and every vertex $u$ of $G$, $o(G\glue{u}{v} H)=o(G)$.

\begin{theorem}
\label{thm:neutral}
Let $H$ be a graph and $v$ be a vertex of $H$. $(H,v)$ is neutral for the glue operator if and only if $o(H)=\mathcal N$ and $o(H \setminus \{v\})=\mathcal D$.
\end{theorem}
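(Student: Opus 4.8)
The plan is to prove the two implications separately, decomposing the glued graph $\Gamma = G\glue{u}{v}H$ into the \emph{$G$-side} $A=V_G\setminus\{u\}$, the \emph{$H$-side} $B=V_H\setminus\{v\}$, and the shared vertex $w$. The key observation is the shape of the closed neighbourhoods: for $a\in A$, $N_\Gamma[a]\subseteq A\cup\{w\}$ equals $N_G[a]$ with $u$ read as $w$; symmetrically for $b\in B$; while $N_\Gamma[w]=\{w\}\cup N_G(u)\cup N_H(v)$. Hence a set dominates $\Gamma$ iff it dominates the $G$-side inside $A\cup\{w\}$, dominates the $H$-side inside $B\cup\{w\}$, and covers $w$ from either side. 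Two facts will drive everything: since $o(H\setminus\{v\})=\mathcal D$, once $v$ is \Dom's he can dominate all of $B$ by himself as a second player, and \emph{under normal alternation \St\ can never isolate a vertex of $B$}; whereas the outcome of the full $H$-side (dominating $B$ together with $w=v$) is governed by $o(H)=\mathcal N$.

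For necessity I would test neutrality on two graphs. With $G=K_1$ the glued graph is isomorphic to $H$ (with $v$ renamed $w$), so neutrality forces $o(H)=o(K_1)=\mathcal N$. With $G=K_2$ and $u$ an endpoint, $\Gamma$ is $H$ with a pendant vertex $u'$ attached at $w=v$. If $o(H\setminus\{v\})\neq\mathcal D$, then \St\ (first) occupies $w$; now $N[u']=\{u',w\}$ forces \Dom\ to answer $u'$, after which the remaining game is a \emph{fresh} copy of $H\setminus\{v\}$ with \St\ to move. As $o(H\setminus\{v\})\in\{\mathcal N,\mathcal S\}$, \St\ (the first player there) isolates a vertex of $B$, so $o(\Gamma)\neq\mathcal D=o(K_2)$, contradicting neutrality. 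Thus neutrality forces $o(H\setminus\{v\})=\mathcal D$.

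For sufficiency, assume $o(H)=\mathcal N$ and $o(H\setminus\{v\})=\mathcal D$. \Dom's direction ($o(\Gamma)\succeq o(G)$) is the clean "answer in the same part'' strategy used for unions: \Dom\ runs his winning $G$-strategy on $A\cup\{w\}$ (reading $w$ as $u$) and, independently, his second-player $H\setminus\{v\}$-strategy on $B$, always replying in the part where \St\ just played (any extra first move going to the $G$-part). He never wastes a move, alternation stays normal in each part, and he dominates the whole $G$-side (in particular $w$) and all of $B$; note $o(H)=\mathcal N$ is not needed here. \St's direction rests on a single device: she \emph{seizes $w$ as early as possible}. Playing $w$ is simultaneously the first move of the $G$-side game and of the $H$-side game; replying in the same part thereafter she becomes first player on each side, wins the $H$-side because $o(H)=\mathcal N$ and the $G$-side because $o(G)\preceq\mathcal N$, and so isolates some $x_G\in\{u\}\cup A$ and some $x_H\in\{v\}\cup B$. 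If either differs from $w$ it is already isolated in $\Gamma$; and if $x_G=u$ and $x_H=v$ then $N_G[u]\cup N_H[v]=N_\Gamma[w]$ is entirely \St's, so $w$ is isolated. This settles \St\ as first player (which, with \Dom's direction, gives $o(\Gamma)=\mathcal N$ when $o(G)=\mathcal N$), and—grabbing $w$ on her first reply—most sub-cases of \St\ as second player when $o(G)=\mathcal S$, including \Dom\ opening with $w$, where $w=u$ is \Dom's and her second-player $\mathcal S$-win on $G$ is forced onto a vertex of $A$ avoiding $u$.

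The hard part, and the step I expect to be the main obstacle, is the lone configuration $o(G)=\mathcal S$ with \St\ second and \Dom\ opening on the $H$-side: now \St\ cannot be first on that side, so $o(H)=\mathcal N$ no longer hands it to her, and if her $G$-win is forced onto $u$ she cannot finish isolating $w$. The intended resolution is to read \Dom's investment in $B$ through imagination in the spirit of Proposition~\ref{prop:imagination}: any move he plays in $N_H(v)$ is precisely what dominates $w$, so \St\ treats it as \Dom\ conceding the domination of $u$ in an imagined $G$-game and then uses her $\mathcal S$-strategy to isolate a vertex of $A$ genuinely different from $u$; conversely $o(H)=\mathcal N$ forbids \Dom, as the mere responder on the $H$-side, from dominating all of $B$ together with $w$, so if he neglects $N_H(v)$ then \St\ retains her threat to isolate $w$. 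Making this correspondence consistent—reconciling the imagined neighbour of $u$ with the real board, and disposing of the degenerate case $N_H(v)=\emptyset$ separately (there $\Gamma=G\cup(H\setminus\{v\})$ and the claim follows from Theorem~\ref{thm:union} since $\mathcal D$ is neutral for the union)—is the delicate bookkeeping that remains.
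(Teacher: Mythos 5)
Your necessity argument is exactly the paper's (the $K_1$ and $K_2$ tests), and your direct parallel strategy for \Dom's direction is workable (the paper gets it in one line instead: $G\cup(H\setminus\{v\})$ is a spanning subgraph of the glued graph, so Proposition~\ref{prop:subgraph} and Theorem~\ref{thm:union} give $o(G\glue{u}{v}H)\succeq o(G\cup (H\setminus\{v\}))=o(G)$). But the central device of your \St\ direction is broken. ``Seizing $w$ as early as possible'' and declaring that playing $w$ ``is simultaneously the first move of the $G$-side game and of the $H$-side game'' presumes that $u$ is a winning first move for \St\ in $G$ and that $v$ is a winning first move in $H$. The hypotheses $o(G)\preceq\mathcal N$ and $o(H)=\mathcal N$ only guarantee that \emph{some} first move wins; by grabbing $w$ immediately \St\ trades her tempo for a vertex that may be worthless. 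Concretely, take $H=P_3$ with $v$ an endpoint (this is $H_2$, so it satisfies the hypotheses of the theorem): \St's only winning first move in $P_3$ is the centre, and if she takes the leaf $v=w$ first, \Dom\ answers with the centre and the entire $H$-side, including $w$, is dominated --- she gets nothing from that side. The same objection defeats ``grabbing $w$ on her first reply'' in your $o(G)=\mathcal S$ sub-cases. The paper's proof avoids this precisely by \emph{delaying} the capture of $w$: \St\ runs her first-player strategy on $H$ until it calls for $w$, then, without playing it, switches to her strategy on $G$ until that strategy also calls for $w$, and only at that moment takes $w$, which is then a winning move in both components simultaneously; any deviation by \Dom\ into the wrong component is neutralized by imagining he played $w$, his real move being a harmless pass in the spirit of Proposition~\ref{prop:imagination}.

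Ironically, the configuration you single out as the main obstacle --- $o(G)=\mathcal S$ with \St\ second and \Dom\ opening on the $H$-side --- is the easy one, and your proposed machinery (reading \Dom's moves in $N_H(v)$ as ``concessions,'' plus a separate case $N_H(v)=\emptyset$) is unnecessary: \St\ simply imagines \Dom\ played $w$ and runs her second-player winning strategy on $G$. Since the imagined \Dom\ owns $u$, the vertex $x$ she isolates satisfies $u\notin N_G[x]$, so $x$ lies in $V_G\setminus\{u\}$ and its closed neighbourhood in the glued graph avoids $w$ entirely; it is therefore genuinely isolated no matter what \Dom\ does on the $H$-side, and $o(H)=\mathcal N$ plays no role in this sub-case. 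As it stands, then, your proposal leaves its self-declared hard case explicitly unfinished (``the delicate bookkeeping that remains'') while resting all the other \St\ cases on a false lemma; repairing it essentially forces you back to the paper's delayed-capture sequencing.
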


\begin{proof}
First, let $H$ be a graph and $v$ be a vertex of $H$. Assume that $(H,v)$ is neutral. Then $o(K_1\glue{}{v} H)=o(K_1)$. Notice that $K_1\glue{}{v}H=H$ and since $o(K_1)=\mathcal N$, we necessarily have $o(H)=\mathcal N$.

Now consider the graph $G=K_2 \glue{}{v} H$ that consists of $H$ with a pendant vertex $v'$ attached to $v$. Since $(H,v)$ is neutral, $G$ has the same outcome as $K_2$ that is $\mathcal D$. 
In particular, \Dom\ has a winning strategy on $G$ by playing second. If \St\ plays first on $v$, \Dom\ has to play on $v'$. His remaining winning strategy is a winning strategy on $H \setminus \{v\}$.

This proves that the conditions are necessary for $(H,v)$ to be neutral. We now prove that they are sufficient.

Let $H$ be a graph, $v$ be a vertex of $H$ and $H' = H \setminus \{v\}$, such that $o(H)= \mathcal N$ and $o(H') = \mathcal D$. Let $G$ be a graph and $u$ a vertex of $G$. In the following, we identify the vertices $u$ and $v$ to $w$ and the glued graph of $G$ and $H$ will be denoted by $G \glue{}{} H$.

Since $o(H')=\mathcal{D}$, $o(G \cup H')= o(G)$ by Theorem~\ref{thm:union}. Note that $G \cup H'$ is a subgraph of $G\glue{}{} H$ where only edges are removed so, by Proposition~\ref{prop:subgraph}, $o(G\glue{}{} H) \succeq o(G \cup H') = o(G)$.

We now show that $o(G\glue{}{} H) \preceq  o(G)$ to conclude the proof. Note that if $o(G) = \mathcal D$ we necessarily have $o(G\glue{}{} H) \preceq  o(G)$.

Assume that $o(G) \preceq \mathcal{N}$. This means that \St\ has a winning strategy on $G$ as first player. Since $o(H)=\mathcal N$, \St\ also has a winning strategy on $H$ as first player. The following strategy is a winning strategy on $G \glue{}{} H$ for \St\ as first player. \St\ begins by applying her winning strategy on $H$ until the strategy requires her to play on $w$. If during this stage \Dom\ plays on $w$, by following her strategy, \St\ will isolate a vertex on $H$ different from $w$. This vertex is not connected to $G$ so she wins. If \Dom\ plays on $G \setminus \{w\}$ then \St\ can imagine that \Dom\ has played on $w$ and will win similarly. So we can assume that \Dom\ always answers in $H'$.

When \St 's strategy on $H$ is to play on $w$, instead of playing $w$, she switches to her winning strategy on $G$. Similarly as before, if \Dom\ does not answer in $G \setminus\{w\}$, \St\ will win by isolating a vertex of $G$ different from $w$. Thus we can assume that \Dom plays only on $G \setminus \{w\}$. Then \St\ continues to apply her winning strategy on $G$ until this strategy requires her to play on $w$. Note that at this point $w$ is a winning move for \St\ both in $G$ and $H$.

\St\ now plays $w$ and answers to every move of \Dom\ with her strategy in the same component. Since she follows her winning strategy in $G$ and $H$ she will isolate a vertex in each of these graphs. If one of those two vertices is not $w$, then \St\ wins because this vertex is isolated in $G \glue{}{} H$. If both of these vertices are $w$, then $w$ and its whole neighborhood are played by \St\ in the glued graph and Staller wins.
So \St\ has a winning strategy as first player in $G\glue{}{} H$ and $o(G\glue{}{} H)\preceq \mathcal N$.

Assume now that $o(G)= \mathcal S$, i.e. \St\ has a winning strategy on $G$ as second player. If \Dom\ begins by playing on $w$, then \St\ can apply her winning strategy in $G$, she will isolate a vertex different from $w$ will win. If \Dom\ begins by playing in $H'$, then \St\ can imagine that he played on $w$, apply her winning strategy on $G$ and win similarly as before. So we can assume that \Dom\ begins by playing in $G \setminus \{w\}$. Then \St\ can follow the same strategy as before: she plays her winning strategy on $G$ until she wins or she has to play on $w$, when this is the case, she turns to her winning strategy as first player on $H$ until she wins or she has to play on $w$. Finally she isolates $w$. As before, \Dom\ has to answer to \St\ on the same graph \St\ has played. Thus $o(G\glue{}{} H) = \mathcal S$.

These three cases prove that $o(G\glue{u}{v} H) \preceq o(G)$. Since we also have $o(G\glue{u}{v} H) \succeq o(G)$, this prove that $o(G\glue{u}{v} H) = o(G)$.
\end{proof}

A question that could be asked is whether or not neutral graphs exist. We solve it by exhibiting an infinite family of neutral graphs:

\begin{definition}
For $n\geq 2$, the \emph{hanging split graph} of size $n$, $H_n$, is the graph composed of a clique of size $n$ with vertex set $\{v,v_1,\ldots , v_{n-1}\}$ and an independent of size $n-1$ with vertex set $\{u_1,\ldots,u_{n-1}\}$. Add an edge $u_iv_i$ for all $1 \leq i\leq n-1$.
\end{definition}

Figure~\ref{fig:glue_neutral} gives a representation of the first two hanging split graphs and of the general case.

\begin{proposition}
\label{prop:H_n}
For all $n \geq 2$, $(H_n, v)$ is neutral for the glue operator.
\end{proposition}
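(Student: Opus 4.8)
By Theorem~\ref{thm:neutral}, to show that $(H_n,v)$ is neutral it suffices to verify the two conditions $o(H_n)=\mathcal N$ and $o(H_n\setminus\{v\})=\mathcal D$. So the proof splits into exactly these two claims, and the whole argument is purely about playing the \MB\ directly on the explicit graph $H_n$; no induction or operator theory beyond Theorem~\ref{thm:neutral} is needed.

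First I would dispose of $o(H_n\setminus\{v\})=\mathcal D$, which I expect to be the easy half. Removing $v$ from $H_n$ leaves the clique $\{v_1,\ldots,v_{n-1}\}$ together with the pendant vertices $u_1,\ldots,u_{n-1}$, where $u_iv_i$ is an edge. This graph has a natural pairing dominating set: pair each $u_i$ with its unique neighbor $v_i$, i.e.\ take $\{(u_1,v_1),\ldots,(u_{n-1},v_{n-1})\}$. Since $u_i$ has degree $1$ its closed neighborhood is exactly $\{u_i,v_i\}$, so $N[u_i]\cap N[v_i]=\{u_i,v_i\}$, and the $v_i$'s among them already dominate the whole clique; hence the union of these intersections covers all $2(n-1)$ vertices. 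By Proposition~\ref{prop:pds} this gives $o(H_n\setminus\{v\})=\mathcal D$.

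The substantive half is $o(H_n)=\mathcal N$, and this is where the main obstacle lies. I would prove it by exhibiting a winning strategy for the \emph{first} player regardless of who that is, since by Proposition~\ref{prop:imagination} the outcome $\mathcal P$ cannot occur, so ruling out $\mathcal P$ means it suffices to show both that \Dom\ wins moving first and that \St\ wins moving first. For \Dom\ moving first, the idea is to play the universal vertex $v$ (adjacent to all of $v_1,\ldots,v_{n-1}$ in the clique), thereby dominating $v$ itself and the whole clique; it then remains to dominate the pendant vertices $u_1,\ldots,u_{n-1}$, and since each $u_i$ is only endangered through the pair $(u_i,v_i)$, \Dom\ can finish using the pairing strategy on the remaining pendant-edge pairs as in Proposition~\ref{prop:pds}. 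For \St\ moving first, the danger vertex is some leaf $u_i$ whose closed neighborhood is only $\{u_i,v_i\}$: the plan is for \St\ to occupy $v_i$ on her first move, threatening to isolate $u_i$, and then to respond to \Dom's moves so as to claim $u_i$ as well before \Dom\ can claim either.

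The delicate point I expect to fight with is making the \St-first strategy robust: after \St\ grabs some $v_i$, \Dom\ will presumably defend $u_i$ by taking it, so a single threatened leaf is not enough, and \St\ must in fact open up two independent threats simultaneously so that \Dom\ cannot parry both. I would therefore refine the strategy so that \St's opening move creates two unblockable isolation threats at once (for instance playing a vertex that sits in the closed neighborhood of two different leaves, exploiting that the $v_i$ are mutually adjacent in the clique), and then argue a parity/counting bound showing \Dom\ lacks the tempo to save both threatened leaves. Getting this double-threat bookkeeping exactly right is the crux; once it is established, combining it with the easy \Dom-first direction and Proposition~\ref{prop:imagination} yields $o(H_n)=\mathcal N$, which together with the first claim and Theorem~\ref{thm:neutral} completes the proof.
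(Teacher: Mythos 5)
Your first half and your \Dom-first strategy are fine and coincide with the paper's: $H_n\setminus\{v\}$ has the perfect matching $\{(u_i,v_i)\}$, hence outcome $\mathcal D$ by Proposition~\ref{prop:pds}, and \Dom\ playing first wins by taking $v$ and then pairing on the pendant edges. The genuine gap is in the \St-first half, and the specific mechanism you propose does not exist. For $n\geq 3$ there is \emph{no} vertex lying in the closed neighborhoods of two distinct leaves: the sets $N[u_i]=\{u_i,v_i\}$ are pairwise disjoint, and the mutual adjacency of the $v_i$'s inside the clique is irrelevant to isolating leaves. Moreover every non-leaf vertex has closed neighborhood of size at least $n$, so after a single opening move \St\ cannot be one move away from isolating any vertex other than the one leaf $u_i$ she may have threatened via $v_i$. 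Hence no opening move creates two simultaneous isolation threats, \Dom\ parries the unique threat by taking $u_i$, and your ``double-threat bookkeeping'' — which you yourself flag as the unresolved crux — cannot be made to work as stated. (It does work for $n=2$, where $H_2=P_3$ and the centre move threatens both endpoints, but that is the only case.)

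The missing idea is a \emph{forcing chain} rather than an opening double threat, and the eventual second threat is against the glue vertex $v$ itself, not a second leaf. The paper's strategy for \St\ playing first: she plays $v_1$; since $N[u_1]=\{u_1,v_1\}$, \Dom\ must answer $u_1$ or \St\ isolates $u_1$ on her next move; she continues with $v_2,\ldots,v_{n-1}$, each move forcing the reply $u_i$. The forced replies $u_i$ lie outside $N[v]=\{v,v_1,\ldots,v_{n-1}\}$, so once \St\ holds all of $v_1,\ldots,v_{n-1}$ she plays $v$ and isolates it. In retrospect a double threat does appear — her move $v_{n-1}$ simultaneously threatens $u_{n-1}$ and $v$ — but it is manufactured by $n-2$ forced exchanges, not by the first move, and without this chain no parity or counting bound will save the argument, since \Dom\ can always parry any single isolated threat. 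Replacing your double-threat paragraph by this forcing chain (and then concluding $o(H_n)=\mathcal N$ and invoking Theorem~\ref{thm:neutral}, exactly as you planned) repairs the proof and recovers the paper's argument.
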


\begin{proof}
Note that $H_n \setminus \{v\}$ has a perfect matching so it has outcome $\mathcal D$ by Proposition~\ref{prop:pds}.

If \Dom\ plays first on $H_n$, a winning strategy is to start on $v$, then the remaining graph has a perfect matching and he will win.

If \St\ plays first on $H_n$, a winning strategy is to play on each $v_i$. \Dom\ has to answer on $u_i$ otherwise \St\ wins immediately by isolating this vertex. When every $v_i$ is played, she can play on $v$ and isolate it.

So both players have a winning strategy when playing first and thus $o(H_n) = \mathcal N$. By Theorem~\ref{thm:neutral}, $(H_n,v)$ is neutral.
\end{proof}

\begin{figure}[ht]
\centering
\begin{tikzpicture}

\node at (0,0){
\begin{tikzpicture}

\node[noeud] (a) at (-1,0){};
\node[noeud] (b) at (0,0){};
\node[noeud] (c) at (1,0){};

\draw (a)--(b)--(c);

\node[below right] at (1,0) {$v$};
\end{tikzpicture}
};

\node at (3,0){
\begin{tikzpicture}

\node[noeud] (a) at (0,0){};
\node[noeud] (b) at (60:1){};
\node[noeud] (c) at (120:1){};
\node[noeud] (d) at (60:2){};
\node[noeud] (e) at (120:2){};

\draw (d)--(b)--(a)--(c)--(e);
\draw (b)--(c);

\node[below right] at (0,0) {$v$};
\end{tikzpicture}
};

\node at (6,0){
\begin{tikzpicture}

\node at (0,0) {$K_n$};

\draw (0,0) circle (1);

\foreach \val in {0,60,120,240,300}{
    \node[noeud] (u\val) at (\val:1){};
}


\foreach \val in {60,120,240,300}{
    \node[noeud] (v\val) at (\val:1.5){};
    \draw (u\val) -- (v\val);
}

\node at (168:1.25) {\tiny $\bullet$};
\node at (180:1.25) {\tiny $\bullet$};
\node at (192:1.25) {\tiny $\bullet$};

\node[below right] at (1,0) {$v$};
\end{tikzpicture}
};

\node at (0,-2) {$H_2$};
\node at (3,-2) {$H_3$};
\node at (6,-2) {$H_n$};

\end{tikzpicture}
\caption{Examples of hanging split graphs.}
\label{fig:glue_neutral}
\end{figure}
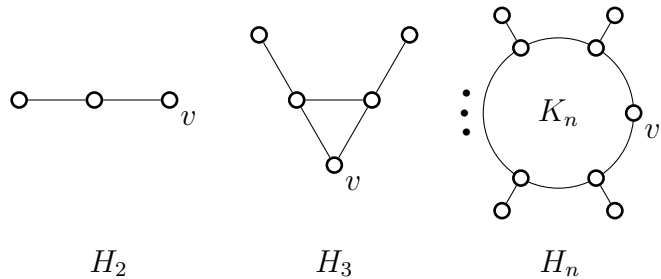

An interest of neutral graphs is that if a graph $G$ is of the form $G'\glue{}{v} H$ with $(H,v)$ being neutral, then we can restrict the study of $G$ to the study of $G'$. In the following, we apply this idea to trees by noticing that $P_3$ is isomorphic to $H_2$ and thus neutral.

We define a \emph{$P_2$-irreducible} graph as a graph without pendant $P_2$, where a pendant $P_2$ is a $P_2$ attached to a graph by an edge. Note that attaching a $P_2$ to a vertex is equivalent to gluing a $P_3$ to the same vertex.

\begin{lemma}
\label{lem:remove_p2}
Every $P_2$-irreducible tree has one of the following form :
\begin{itemize}
\item $K_1$
\item $P_2$
\item $K_{1,n}$ with $n \geq 3$
\item Trees where there are at least two vertices with more than two leaves as neighbors.
\end{itemize}
\end{lemma}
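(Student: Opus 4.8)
The plan is to translate the forbidden-subgraph condition into a purely local degree condition and then run a longest-path argument. First I would unpack the definition of a pendant $P_2$: as noted in the text, attaching a $P_2$ to a vertex $u$ amounts to adding a path $u - m - \ell$ with $\ell$ a new leaf and $m$ of degree exactly $2$ (neighbours $u$ and $\ell$). Hence, in a tree, a pendant $P_2$ is exactly a degree-$2$ vertex adjacent to a leaf, so $T$ being $P_2$-irreducible is equivalent to saying that \emph{no} degree-$2$ vertex of $T$ is adjacent to a leaf. Calling a vertex adjacent to at least one leaf a \emph{support vertex}, I would then observe that in a tree on at least three vertices a support vertex cannot have degree $1$ (otherwise its component would be a single $P_2$); combined with $P_2$-irreducibility this forces every support vertex to have degree at least $3$.

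The cases $|V(T)| \le 2$ are immediate, the only trees being $K_1$ and $P_2$, both listed. So assume $|V(T)| \ge 3$ and fix a longest path $v_0 v_1 \cdots v_k$ of $T$; its endpoints $v_0$ and $v_k$ are leaves and necessarily $k \ge 2$. The key step is to analyse $v_1$. Every neighbour of $v_1$ other than $v_2$ must be a leaf: if $v_1$ had a non-leaf neighbour $w \notin \{v_0,v_2\}$, then $w$ would have a further neighbour $w'$ and $w' w v_1 v_2 \cdots v_k$ would be a path strictly longer than the chosen one, a contradiction. Since $v_1$ is a support vertex it has degree at least $3$, and as only $v_2$ among its neighbours may fail to be a leaf, $v_1$ has at least two leaf-neighbours; by symmetry the same holds for $v_{k-1}$.

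I would finish by splitting on $k$. If $k = 2$ then $v_1 = v_{k-1}$, and since every neighbour of $v_1$ other than $v_2$ is a leaf while $v_2 = v_k$ is itself a leaf, all neighbours of $v_1$ are leaves; thus $T = K_{1,\deg(v_1)}$, and $P_2$-irreducibility rules out $\deg(v_1) = 2$ (that graph is $P_3$, whose centre is a degree-$2$ support vertex), leaving a star $K_{1,n}$ with $n \ge 3$. If $k \ge 3$ then $v_1 \neq v_{k-1}$, so $T$ contains two distinct vertices each adjacent to at least two leaves, which is the fourth family. As these outcomes are exhaustive, the proof is complete.

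The only genuinely non-routine point is the maximal-path observation that the off-path neighbours of $v_1$ are forced to be leaves; everything else is bookkeeping on degrees. I would also flag a reading of the statement: the argument yields two vertices each with \emph{at least two} leaf-neighbours, and this is sharp, since the double star formed by joining two vertices that each carry exactly two pendant leaves is $P_2$-irreducible yet has no vertex with three leaf-neighbours. The fourth item should therefore be understood as ``at least two vertices having at least two leaves as neighbours.''
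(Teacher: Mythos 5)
Your proof is correct, and it rests on the same extremal idea as the paper's: $P_2$-irreducibility forces the neighbour of a suitably extremal leaf to have degree at least $3$, and extremality forces that neighbour's remaining children to be leaves. The implementations differ in organization. The paper roots the tree at a vertex $r$ of degree at least $3$, picks a deepest leaf in each nontrivial subtree hanging from $r$, and then splits three ways according to how many subtrees are nontrivial; you instead take a longest path $v_0v_1\cdots v_k$ and analyse $v_1$ and $v_{k-1}$, which collapses that case analysis into the dichotomy $k=2$ versus $k\geq 3$ and produces the two required support vertices in one stroke (for $k\geq 3$ they are automatically distinct). Your preliminary reformulation --- in a tree, a pendant $P_2$ is exactly a degree-$2$ vertex adjacent to a leaf, so irreducibility means no degree-$2$ support vertex --- is a clean step the paper leaves implicit, invoking the same fact in passing when it asserts that the parent $y_i$ of a deepest leaf has degree at least $3$. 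Finally, your flagged reading of the fourth item as ``at least two vertices, each with \emph{at least} two leaf-neighbours'' is the correct one: the paper's proof only establishes two leaf-neighbours per vertex, its later application in the tree algorithm uses exactly that (``two distinct vertices are attached to two leaves or more''), and your double-star example shows the literal reading ``more than two'' would make the statement false.
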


Figure~\ref{fig:red_trees} shows a representation of these different cases.

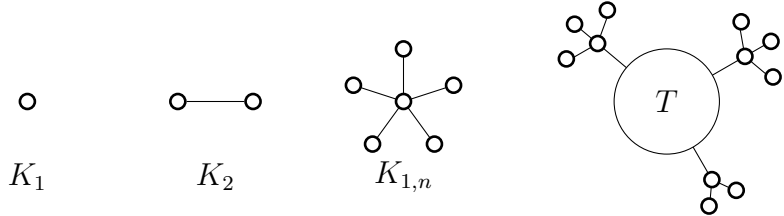
\begin{figure}[ht]
\centering

\begin{tikzpicture}
\draw (0,0) node[noeud] {};

\node[noeud] (a1) at (2,0){};
\node[noeud] (a2) at (3,0){};
\draw (a1) --(a2);

\draw (5,0) -- ++(18:0.7);
\draw (5,0) -- ++(90:0.7);
\draw (5,0) -- ++(162:0.7);
\draw (5,0) -- ++(234:0.7);
\draw (5,0) -- ++(306:0.7);

\draw (5,0) node[noeud] {};
\draw (5,0)++(18:0.7) node[noeud] {};
\draw (5,0)++(90:0.7) node[noeud] {};
\draw (5,0)++(162:0.7) node[noeud] {};
\draw (5,0)++(234:0.7) node[noeud] {};
\draw (5,0)++(306:0.7) node[noeud] {};

\draw (8.5,0) circle (0.7);
\draw (8.5,0) node {$T$};

\draw (8.5,0)++(140:0.7) -- ++(140:0.5);
\draw (8.5,0)++(30:0.7) -- ++(30:0.5);
\draw (8.5,0)++(-60:0.7) -- ++(-60:0.5);

\draw (8.5,0)++(140:1.2) -- ++(70:0.4);
\draw (8.5,0)++(140:1.2) -- ++(140:0.4);
\draw (8.5,0)++(140:1.2) -- ++(210:0.4);

\draw (8.5,0)++(30:1.2) -- ++(-40:0.4);
\draw (8.5,0)++(30:1.2) -- ++(30:0.4);
\draw (8.5,0)++(30:1.2) -- ++(100:0.4);

\draw (8.5,0)++(-60:1.2) -- ++(-25:0.4);
\draw (8.5,0)++(-60:1.2) -- ++(-95:0.4);

\draw (8.5,0)++(140:1.2) node[noeud] {};
\draw (8.5,0)++(140:1.6) node[noeud] {};
\draw (8.5,0)++(157:1.45)node[noeud] {};
\draw (8.5,0)++(123:1.45)node[noeud] {};
\draw (8.5,0)++(30:1.2) node[noeud] {};
\draw (8.5,0)++(30:1.6) node[noeud] {};
\draw (8.5,0)++(13:1.45) node[noeud] {};
\draw (8.5,0)++(47:1.45) node[noeud] {};
\draw (8.5,0)++(-60:1.2) node[noeud] {};
\draw (8.5,0)++(-68:1.5) node[noeud] {};
\draw (8.5,0)++(-52:1.5) node[noeud] {};

\draw (0,-1) node {$K_{1}$};
\draw (2.5,-1) node {$K_{2}$};
\draw (5,-1) node {$K_{1,n}$};

\end{tikzpicture}
\caption{Different possible reductions for trees.}
\label{fig:red_trees}
\end{figure}

\begin{proof}
Let $T$ be a $P_2$-irreducible graph.
If $T$ has only vertices of degree $1$ ans $2$, then $T$ is a path. The only paths that are $P_2$-irreducible are $K_1$ and $K_2$.

Thus, let $r$ be a vertex of degree at least 3 and consider $T$ as a rooted tree on $r$. Let $T_1$,...,$T_k$ be the subtrees connected to $r$.

Consider a subtree $T_i$ that is not a single vertex. Let $x_i$ be a leaf of maximal depth in $T_i$. The parent $y_i$ of $x_i$ has degree at least 3 (otherwise $T$ is not $P_2$-irreducible) and thus has at least one other child, which is necessarily a leaf by maximality of the depth of $x_i$.
Hence in every subtree that is not a single vertex, there is a vertex with at least two leaves as neighbours.

If there are two subtrees of size at least two, we are in the last case. If all the subtrees have size one, the tree is actually a star with at least three leaves. If there is exactly one subtree of size at least two, then $r$ is another vertex with at least two leaves (the other subtrees) and we are again in the last case.
\end{proof}

\begin{theorem}
Deciding the outcome of the \MB\ on trees is polynomial.
\end{theorem}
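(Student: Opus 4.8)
The plan is to reduce an arbitrary tree to a small canonical form whose outcome is easy to read off, using the neutral graph $P_3\cong H_2$ as the engine of the reduction. First I would observe that attaching a pendant $P_2$ to a vertex $u$ is exactly gluing $P_3$ at one of its endpoints to $u$, and that $P_3$ is the hanging split graph $H_2$, which is neutral by Proposition~\ref{prop:H_n}. Hence, by Theorem~\ref{thm:neutral}, deleting a pendant $P_2$ from a tree leaves the outcome unchanged. Repeatedly deleting pendant $P_2$'s decreases the number of vertices by two each time, so after at most $|V|/2$ steps, each performed in polynomial time, we reach a $P_2$-irreducible tree $T'$ with $o(T')=o(T)$. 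By Lemma~\ref{lem:remove_p2}, $T'$ is one of $K_1$, $P_2$, a star $K_{1,n}$ with $n\ge 3$, or a tree containing two distinct vertices each having at least two leaves as neighbours; it then remains to compute the outcome in each case.

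For the first three cases the outcome is immediate. We have $o(K_1)=\mathcal N$ (the sole vertex is dominating if \Dom\ takes it and isolated if \St\ takes it), and $o(P_2)=\mathcal D$ since $P_2$ has a perfect matching, hence a pairing dominating set, so Proposition~\ref{prop:pds} applies. For a star $K_{1,n}$ with centre $c$ and $n\ge 3$ leaves, I would argue $o(K_{1,n})=\mathcal N$: if \Dom\ moves first he plays $c$ and dominates everything, while if \St\ moves first she plays $c$, after which every leaf $\ell$ has closed neighbourhood $\{\ell,c\}$ and can only be dominated by \Dom\ occupying $\ell$ itself. Since the $n$ leaves are then shared by alternating play, \St\ secures at least one leaf and isolates it, so both first players win.

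The core of the argument is the last case, which I claim yields $o(T')=\mathcal S$. The key observation is that a vertex $y$ with two leaf neighbours $\ell,\ell'$ is a double threat for \St: once she has played $y$, each of $\ell$ and $\ell'$ can only be dominated by \Dom\ occupying it (their closed neighbourhoods are $\{\ell,y\}$ and $\{\ell',y\}$), so \Dom\ saves only one and \St\ isolates the other on her next move. This already shows \St\ wins as first player by opening on one such vertex. For \St\ playing second, the crucial direction, I would use that $T'$ has \emph{two} distinct vertices $y_1,y_2$ each with at least two leaf neighbours. Writing $L_i$ for the leaf neighbours of $y_i$, and using that leaves have degree one and $y_1\neq y_2$, the sets $\{y_1\}\cup L_1$ and $\{y_2\}\cup L_2$ are disjoint, so \Dom's single opening move can meet at most one of them. \St\ then selects the untouched vertex $y_i$, whose two leaves are still unplayed, and executes the double threat. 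The main obstacle is precisely this second-player analysis: a single vertex with two leaves does not suffice against a \Dom\ who moves first, and the role of the two-vertex condition in Lemma~\ref{lem:remove_p2} is exactly to absorb \Dom's free opening move.

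Putting the pieces together yields a polynomial-time algorithm: reduce $T$ to a $P_2$-irreducible tree $T'$ by stripping pendant $P_2$'s, classify $T'$ according to Lemma~\ref{lem:remove_p2}, and output $\mathcal N$, $\mathcal D$, $\mathcal N$, or $\mathcal S$ accordingly; by the neutrality of $P_3$ this is the outcome of $T$.
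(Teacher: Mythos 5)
Your proposal is correct and follows essentially the same route as the paper: strip pendant $P_2$'s using the neutrality of $P_3\cong H_2$ (Proposition~\ref{prop:H_n} and Theorem~\ref{thm:neutral}), classify the resulting $P_2$-irreducible tree via Lemma~\ref{lem:remove_p2}, and read off the outcome in each case, with the two-leaf double threat handling the $\mathcal S$ case. Your explicit disjointness argument for $\{y_1\}\cup L_1$ and $\{y_2\}\cup L_2$ just spells out what the paper's ``one of the two vertices and its leaves are unplayed by \Dom'' step implicitly uses.
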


\begin{proof}
The following algorithm solves the \MB\ on trees in polynomial time:

For a tree $T$, iteratively remove a pendant $P_2$ until it is not possible anymore. Let $T'$ be the obtained tree. If $T'=P_2$, return the answer $\mathcal D$. If $T'=K_1$ or $K_{1,n}$ with $n \geq 3$, then return $\mathcal N$. Otherwise, return $\mathcal S$.

Note that the above algorithm is polynomial. Indeed, removing pendant $P_2$'s can be done in polynomial time by keeping in memory the set of leaves at each time and updating it when necessary. Verifying that a tree is $K_1$, $P_2$ or a star can also be done in polynomial time.

We now prove the correctness of the algorithm. Let $T_1, \ldots , T_k$ be the intermediary trees obtained after removing a pendant $P_2$. From Proposition~\ref{prop:H_n}, we know that $P_3$ is neutral and a pendant $P_2$ can be seen as the glue with a $P_3$. So $o(T) = o(T_1) = \ldots  = o(T_k) = o(T')$, and the outcome of $T$ is the same as the outcome of $T'$.
Since $T'$ is $P_2$-irreducible, it corresponds to one of the situations described in Lemma~\ref{lem:remove_p2}. If it is a $P_2$, the outcome is $\mathcal D$. If it is $K_1$ or $K_{1,n}$ with $n \geq 3$, the first player wins by playing on the central vertex and thus the outcome is $\mathcal N$. In the last case, two distinct vertices are attached to two leaves or more. Assume that \St\ plays second on $T'$. After \Dom 's first move, one of the two vertices and its leaves are unplayed by \Dom. \St\ can play this vertex and will isolate one of its leaves after her next move. Hence $T'$ is indeed $\mathcal S$ in this last case.

We conclude that the outcome of $T$ is the same as the outcome of $T'$ and the algorithm correctly returns the right output.
\end{proof}

\begin{remark}
Note that a tree has outcome $\mathcal D$ only if by removing pendant $P_2$'s the remaining graph is a $P_2$. This means that a tree has outcome $\mathcal D$ if and only if it admits a perfect matching and thus if and only if there is a pairing dominating set.
\end{remark}

\section{Conclusion and perspectives}

In this paper, the complexity of the \MB\ is studied for different classes of graphs. {\sc pspace}-completeness is proved for split and bipartite graphs, whereas polynomial algorithms are given for cographs and trees. An interesting equivalence property is that in these last two cases, the outcome is $\mathcal D$ if and only if the graph admits a pairing dominating set. The study of the pairing dominating set problem might be a key in the study of the threshold between {\sc pspace} and {\sc p} for the \MB .

As stated in the introduction, another problem that might be relevant to consider is the number of moves needed by \Dom\ to win. In particular, it could be worth studying the correlation of this value with the dominating number or the game dominating number. 

Also, this game has been built from the dominating set problem. Other remarkable structures in graphs could have been chosen, such as total dominating sets. Another variant would be to consider the game in an oriented version.

\section*{Acknowledgment}

The authors would like to thank Simon Schmidt, Milo\v s Stojakovi{\'c} and Sandi Klav\v zar for the fruitful discussions about this topic.

\end{document}